\documentclass[11pt,twoside]{article}

%%%%%%%%%%%%%%%%%%%%%%%%%%%%%%%%%%%%%%%%%%%%%%%%%%%%%%%%%%%%%%%%%%%%%

\usepackage[ansinew]{inputenc} %%Siehe UTF-8
\usepackage{amsmath,amsfonts,amssymb,amsthm}
\usepackage{pstricks,pst-node,pst-coil,pst-plot,pstricks-add}
\usepackage{graphics,geometry,epsfig}
\usepackage{bbm}

\usepackage{mathrsfs} % used for mathscr

\setlength{\oddsidemargin}{7mm} \setlength{\evensidemargin}{7mm}
\setlength{\topmargin}{-10mm} %\setlength{\topmargin}{10mm}
\setlength{\textheight}{9in} \setlength{\textwidth}{5.7in} % originally 6.2in

\newcommand{\field}[1]{\mathbb{#1}}

\newcommand{\R}{\field{R}}
\newcommand{\C}{\field{C}}

\newcommand{\HH}{\mathscr H}

\newcommand{\EE}{\mathcal E}
\newcommand{\FF}{\mathcal F}

 %%Identitaetsoperator

\newcommand{\Hres}{H_{\alpha} }    % rescaled Froelich Hamiltonian
\newcommand{\Hrescut}{H_{\alpha,\Lambda} }    % rescaled Froelich Hamiltonian with cutoff
\newcommand{\Heff}{H_{\ph_0} }   % effective polaron Hamiltonian 

\newcommand{\eps}{\varepsilon}
\newcommand{\ph}{\varphi}

\newcommand{\ran}{\mathrm{Ran}}

 %SSUG
\newcommand{\Nph}{N_\mathrm{ph}}

\newcommand{\restricted}{|\grave{}\,}

                 % absolute value
\newcommand{\expect}[1]{\mbox{$\langle #1 \rangle $}}         % expectation value
           % new norm
\newcommand{\sprod}[2]{\mbox{$\left\langle #1,#2 \right\rangle$}}        % scalar product
   % quadratic form
%\newcommand{\choose}{2}{\left(\begin{array}{cc}#1\\\right)}

  %%Textauszeichnung Definitionen

\newcommand{\Ima}{\operatorname{Im}}
\newcommand{\Rea}{\operatorname{Re}}

 %eingeschraenkt auf
      %Starker limes

\newtheorem{theorem}{Theorem}[section]
\newtheorem{lemma}[theorem]{Lemma}

\newtheorem{prop}[theorem]{Proposition}

\theoremstyle{plain}

%\pagestyle{myheadings}
%\markboth{soliton-v5 / 30 November 2016}{soliton-v5}

%\title{On the Dynamics of Polarons in the Strong-Coupling Limit}
%\title{On the dynamics of polarons in the strong-coupling limit}

\title{\textbf{On the dynamics of polarons in the strong-coupling limit}}

\author{Marcel Griesemer\\  
\small Fachbereich Mathematik, Universit\"at Stuttgart, D-70569 Stuttgart, Germany\\
\small marcel.griesemer@mathematik.uni-stuttgart.de}  

\date{}

\begin{document}
\maketitle
\begin{abstract}
The polaron model of H. Fr\"ohlich describes an electron coupled to the quantized longitudinal optical modes of a polar crystal. In the strong-coupling limit one expects that the phonon modes may be treated classically, which leads to a coupled Schr\"odinger-Poisson system with memory. For the effective dynamics of the electron this amounts to a nonlinear and non-local Schr\"odinger equation. We use the Dirac-Frenkel variational principle to derive the Schr\"odinger-Poisson system from the Fr\"ohlich model and we  present new results on the accuracy of their solutions for describing the motion of Fr\"ohlich polarons in the strong-coupling limit. Our main result extends to $N$-polaron systems.
\end{abstract}

\section{Introduction}
%\begin{align*}
 %    (W\psi)(x) = \phi(G_x)\psi(x),\qquad G_x(k):= \frac{1}{2\pi|k|}e^{-ikx}.
%\end{align*}
%More precisely, $H_F$ is the norm-resolvent limit of a UV-cutoff Fr\"ohlich Hamiltonian $H_{F,\Lambda}$ in the limit $\Lambda\to\infty$. 

In the Fr\"ohlich model of large polarons the energy of an electron in a ionic crystal is described a Hamiltonian $H_F$ that may be formally written as 
\begin{equation}\label{pure-ham}
      H_F = -\Delta + \alpha^{-2}N + \alpha^{-1}W,
\end{equation}
where $\Delta$ the Laplacian in $L^2(\R^d)$ and $N$ is the number operator on symmetric Fock space, $\FF$, over  $L^2(\R^d)$. The operator $W$ is linear in creation and annihilation operators and accounts for the interaction of electron an phonons. The explicit form of this interaction depends on the space dimension $d\in\{2,3\}$ and is given in Section~\ref{DF-polaron}. The parameter $\alpha>0$ is a dimensionless coupling constant and large $\alpha$ means \emph{strong} coupling.  See Section~\ref{DF-accuracy} for an explanation of our units in \eqref{pure-ham}. 

We are interested in the dynamics generated by $H_F$ in the case of large $\alpha$ and our work is inspired by the following well-known result on the ground-state energy $E_F=\inf\sigma(H_F)$: Let $E_P$ be the minimum of $\sprod{\psi}{H_F\psi}$ with respect to all product states $\psi=\ph\otimes \eta$ of normalized $\ph\in L^2$ and $\eta\in\FF$. Then 
$E_P$, which is known as the Pekar energy, is an upper bound on $E_F$ and $E_P\to E_F$ in the limit $\alpha\to\infty$ \cite{DV1983}. More precisely  \cite{LiebThomas1997},
\begin{equation}\label{in:LT}
      E_F \leq E_P \leq E_F + O(\alpha^{-1/5})\qquad (\alpha\to\infty).
\end{equation}
The Pekar energy $E_P$, by its very definition, is the minimum of the Pekar functional $\ph\mapsto \min_{\|\eta\|=1}\sprod{\ph\otimes\eta}{H_F(\ph\otimes \eta)}$ on the sphere $\|\ph\|=1$ of $L^2$. Any minimizer solves the corresponding Euler-Lagrange equation
\begin{equation}\label{in:Pekar}
      (-\Delta + V)\ph = \lambda \ph,\qquad V = - |\cdot|^{-1}*|\ph|^2,
\end{equation}
where $V$,  for given $\ph$, is the potential generated by the (coherent) state minimizing $\sprod{\ph\otimes \eta}{H_F\ph\otimes\eta}$. In physical terms it is the potential due to the deformation of the ionic lattice by the electron. It is well-known that the Pekar functional has a minimizer and hence that \eqref{in:Pekar} has a solution.

The question arises whether the equations~\eqref{in:Pekar} have a generalization to a time-dependent setting, where they approximately describe the motion of a polaron in the case $\alpha \gg 1$. The answer to this question from the literature is the system 
\begin{equation}\label{in:LP}
      (-\Delta + V)\ph = i\partial_t \ph,\qquad (\alpha^4 \partial_t^2 + 1)V = - |\cdot|^{-1}*|\ph|^2,
\end{equation}
which apparently was first written down, in an equivalent form, by  Landau and Pekar \cite{LP1948, DevAle2009}. The mathematical analysis of this system was initiated in  \cite{BNRS2000} and continued in
\cite{FG2015,GSS2016}. Note that \eqref{in:LP} reduces to \eqref{in:Pekar} in the stationary case where $V$ is independent of time and  $\ph_t = e^{-i\lambda t}\ph_0$.

The accuracy of the effective evolution defined by equations that are closely related to \eqref{in:LP} was  recently analyzed in two papers by Frank with Schlein and Gang, respectively \cite{FrankSchlein2014, FG2015}. In \cite{FrankSchlein2014} the effective dynamics is linear and it affects the electron only, the phonons being frozen. The analysis in \cite{FG2015} is much refined in comparison to \cite{FrankSchlein2014}. The results concern the effective evolution of product states defined by the Landau-Pekar equations and the limitations of such an approximation are discussed as well. The error bounds in \cite{FrankSchlein2014, FG2015} are good for $|t| = o(\alpha)$\footnote{The Gronwall argument in \cite{FrankSchlein2014} is easily modified to give $|t| = o(\alpha)$, rather than  $|t| = o(\log \alpha)$.}. 

In this paper we derive the system \eqref{in:LP} from the Schr\"odinger equation $i\partial_t \psi = H_F\psi$ with the help of the Dirac-Frenkel time-dependent variational principle in connection with the manifold of all product states $\ph\otimes\eta\in L^2\otimes \FF$ \cite{Lubich}. This methods provides us with a coupled system of Schr\"odinger equations for $\ph$ and $\eta$, we call them Dirac-Frenkel equations, where the one for $\eta$ can be solved explicitly and then yields a potential $V$ satisfying \eqref{in:LP}. Likewise, the Landau-Pekar equations considered in \cite{FG2015} follow from the Dirac-Frenkel equations for the polaron. We stress that the Dirac-Frenkel principle is \emph{not} based on unjustified assumptions (like propagation of chaos) and that it automatically gives the correct phase for the approximating evolution. Moreover, the variational principle  of Dirac and Frenkel provides one with conservation laws, a geometric picture, and analytic relations that are useful for the comparison with the true quantum evolution. There is a particular solution to the Dirac-Frenkel equations of the form $t\mapsto e^{-iE_Pt}(\ph_0\otimes\eta_0)$, where $\ph_0$ is a minimizer of the Pekar functional and $\eta_0$ is the coherent state associated with $\ph_0$. It corresponds to the solution of \eqref{in:LP}, where $V$ is time-independent and thus \eqref{in:LP} reduces to \eqref{in:Pekar}. Armed with the method of Dirac and Frenkel and with some inspiration from \cite{FrankSchlein2014} we  show in this paper that
\begin{equation}\label{main-result}
        \|e^{-iHt}(\ph_0\otimes\eta_0) - e^{-iE_Pt}(\ph_0\otimes\eta_0)\|^2 \leq C \frac{|t|}{\alpha^2},
\end{equation}
for all $t\in\R$ and $\alpha\geq 1$. This expands by a power of two the time scale for which a solution to \eqref{in:LP} has previously been found in good agreement with the true quantum evolution. In addition it shows that the state $\ph_0\otimes \eta_0$, whose energy $E_P = \sprod{\ph_0\otimes\eta_0}{H(\ph_0\otimes\eta_0)}$, by \eqref{in:LT}, is close to $E_F$, is very stable for large $\alpha$, which is consistent with the fact that the polaron mass grows to infinity as $\alpha\to \infty$ \cite{Spohn1987}.
As a further illustration of our methods we compare the Dirac-Frenkel evolution $u_t = a(t)\ph_t\otimes \eta_t$ to the true polaron evolution for a fairly general class of initial states $\ph_0\otimes \eta_0$, with $\eta_0$ being a coherent state. We find that 
\begin{equation}\label{main-two}
        \|e^{-iHt}(\ph_0\otimes\eta_0) - u_t\|^2  = O(|t|/\alpha),
\end{equation}
for $\alpha\geq 1$ and all $t\in\R$. The evolution $t\mapsto u_t$ agrees with the evolution of states of $\ph_0\otimes \eta_0$ considered in \cite{FG2015}. The very involved analysis in \cite{FG2015} yields the better bound $O(|t|/\alpha)^2$ for the left-hand side of \eqref{main-two}. Our main point here is that \eqref{main-two} follows from a very short argument exploiting general properties of the Dirac-Frenkel variational principle and the fact that $\eta_0$ is a coherent state.

In Section \ref{DF-abstract} the time-dependent variational principle of Dirac and Frenkel is described in its abstract form, and in Section~\ref{DF-polaron} it is applied to the polaron.
In particular, the system \eqref{in:LP} is derived in Section~\ref{DF-polaron}. Our main result, \eqref{main-result}, is established in Section~\ref{DF-accuracy} for polarons in two and tree space dimensions and in Section~\ref{sec:N-polaron} it is extended to multipolaron systems in three space dimensions. An important technical tool in our work is Lemma~\ref{lm:FSch}, which we learned from \cite{FrankSchlein2014}. This lemma is very useful for controlling the mild ultraviolet problem of \eqref{pure-ham}. It has also been used effectively in \cite{GriWue2016} for describing the domain of self-adjointness of the Fr\"ohlich Hamiltonian. Estimate \eqref{main-two} is proved in Section~\ref{DF-general}.

In this paper the Fourier transform $\hat\rho$ of a function $\rho\in L^1(\R^d)$ is defined by
$$
         \hat\rho(k):= \int e^{-ikx}\rho(x)\,dx,
$$
so that $\widehat{f*g} = \hat{f}\cdot \hat{g}$ and $\sprod{f}{g} = (2\pi)^{-d}\langle\hat{f},\hat{g}\rangle$. In $d=3$ it follows that $|x|^{-1}$ is sent to $4\pi/|k|^2$ under Fourier transform.

%--------------------------------------------------------------------------------------------------------------------------------------------------------------------------------------------------------------------
%\newpage
\section{The Dirac-Frenkel Approximation}
\label{DF-abstract}

In this section we present the Dirac-Frenkel approximation in its abstract form and for the manifolds of our interest. For a more elaborate discussion, the history and more applications we refer to the ``blue book" of Lubich \cite{Lubich}.

Let $\HH$ be some complex Hilbert space and let $M\subset \HH$ be a
subset of $\HH$ which is a manifold in the following sense: we assume
that the set of velocities 
$$
      \dot{u} = \frac{d}{dt}u_t\Big|_{t=0}
$$
of differentiable curves $t\mapsto u_t\in M$ with $u_{t=0}=u$ is a
closed subspace of $\HH$. This subspace is denoted $T_{u}M$ and called tangent space 
of $M$ at the point $u\in M$. Its closedness ensures the existence of an orthogonal
projection $P(u):\HH \to T_{u}M$ onto $T_{u}M$. Now let $H:D\subset \HH\to \HH$ be some self-adjoint operator in $\HH$ and
let $u_0\in M$. Then any solution $u$ of the Dirac-Frenkel equation
\begin{equation}\label{DF}
         \dot{u} = -iP(u) Hu\qquad u|_{t=0}=u_0
\end{equation}
may be considered an approximation to the Schr\"odinger
evolution $t\mapsto e^{-iHt}u_0$. In applications we expect that the
solution to \eqref{DF} is unique and we are interested in the accuracy
of the Dirac-Frenkel approximation $u$.  A simple bound is obtained as
follows: if $t\mapsto u_t$ is a solution to \eqref{DF} then 
$$
   \frac{d}{dt}e^{iHt}u_t = ie^{iHt}P(u)^{\perp}Hu
$$
which integrates to 
$$
       e^{iHt}u_t-u_0 = i\int_0^t e^{iHs}P(u_s)^{\perp}Hu_s\, ds.
$$
It follows that 
$$
     \|u_t - e^{-iHt}u_0\| \leq \int_0^t \|P(u)^{\perp}Hu\|\, ds.
$$
Note that the energy $\sprod{u}{Hu}$ is conserved along solutions to
\eqref{DF} because
$$
    \frac{d}{dt}\sprod{u}{Hu} = 2\Rea\sprod{Hu}{\dot u} =
    2\Rea(-i)\sprod{Hu}{P(u)Hu} =0. 
$$
By a similar computation the norm $\|u\|$ is conserved provided that
$u\in T_{u}M$ for all $u\in M$, which will be the case for our choice of $M$. 

In this paper we are interested in Hilbert spaces of the form $\HH = \HH_1\otimes\HH_2$ and the submanifold
$$
     M:= \{u\in \HH \mid u=a\ph\otimes \eta \neq 0,\
     \text{where}\ a\in\C,\ \ph\in\HH_1,\eta\in \HH_2\}.
$$
The tangent space $T_{u}M$ of $M$ at the point $u=a\ph\otimes \eta$
consists of all vectors of the form
\begin{equation}\label{dot-u}
     \dot{u} = \dot{a} \ph\otimes \eta + a\dot{\ph}\otimes \eta + a\ph\otimes \dot{\eta}
\end{equation}
with arbitrary $\dot{a} \in\C$, $\dot{\ph}\in \HH_1$ and $\dot{\eta}\in
\HH_2$. These representations of $u$ and $\dot{u}$ are not unique and the
liberty in the choice of the parameters $a,\ph,\eta,
\dot{a},\dot{\ph},\dot{\eta}$ is called gauge freedom. Let us fix the gauge by imposing 
the gauge conditions 
\begin{equation}\label{normal}
     \|\ph\|= 1,\qquad\|\eta\|=1
\end{equation}
and 
\begin{equation}\label{gauge}
 \sprod{\dot{\ph}}{\ph} = 0,\qquad \sprod{\dot{\eta}}{\eta} = 0.
\end{equation}
The second condition is achieved by the gauge transform
$\dot{a}\mapsto \dot{a}+(\lambda+\mu)a$, $\dot{\ph}\mapsto \dot{\ph}-\lambda\ph$, and
$\dot{\eta}\mapsto\dot{\eta}-\mu\eta$ with suitable $\lambda,\mu\in
\C$. These gauge conditions can be imposed on on any differentiable
curve $u=a(\ph\otimes\eta)$ in $M$. Indeed, the normalization of $\ph$
and $\eta$ preserves differentiability and once \eqref{normal} is
satisfied, the gauge transform
\begin{align*} 
    \ph &\mapsto \ph\exp\left(-\int \sprod{\ph}{\dot\ph}dt\right)\\
    \eta &\mapsto \eta\exp\left(-\int \sprod{\eta}{\dot\eta}dt\right)\\
    a &\mapsto a \exp\left(\int \sprod{\eta}{\dot\eta}+ \sprod{\ph}{\dot\ph}\,dt \right)\
\end{align*}
preserves the norms and leads to \eqref{gauge}. 

The mere possibility to represent of tangent vectors  \eqref{dot-u} in the gauge \eqref{gauge}
shows that $P(u)$ at the point $u=a\ph\otimes \eta$ can be written as a
sum of three orthogonal and mutually orthogonal projections
\begin{equation}\label{Pu}
     P(u) = P_{\ph\otimes \eta } + P_\ph^{\perp}\otimes P_\eta +
     P_\ph\otimes P_\eta^{\perp},
\end{equation}
where $P_\ph$ denotes the orthogonal projection onto the
one-dimensional space spanned by the vector $\ph$.  Other useful formulas for $P(u)$ are 
\begin{align}
     P(u) &= P_\ph\otimes 1 + 1\otimes P_\eta - P_{\ph\otimes \eta } \label{Pu2}\\
     P(u)^{\perp}  &=  P_\ph^{\perp}\otimes P_\eta^{\perp}.\nonumber
\end{align}

Suppose $u=a\ph\otimes \eta$ is a curve in $M$ written in the gauge
\eqref{gauge}, $a$, $\ph$ and $\eta$ being differentiable curves in $\C$, $\HH_1$ and $\HH_2$, respctively. Then the Dirac-Frenkel equation \eqref{DF}
splits into three equations associated with the three mutually orthogonal subspaces of
$T_uM$ defined by \eqref{Pu}. These equations read 
\begin{align}
    i\dot{a} &= \expect{H}_{\ph\otimes\eta}a \label{dot-a}\\
    i\dot{\ph} &= (H_\eta -  \expect{H}_{\ph\otimes\eta})\ph \label{dot-ph}\\
     i\dot{\eta} &= (H_\ph -  \expect{H}_{\ph\otimes\eta})\eta \label{dot-eta}
\end{align}
where $ \expect{H}_{\ph\otimes\eta} =
\sprod{\ph\otimes\eta}{H(\ph\otimes\eta)} / \sprod{\ph\otimes\eta}{\ph\otimes\eta}$ and $H_\eta$, $H_\ph$ are
operators defined by 
\begin{align*}
     %\sprod{\gamma}{H_\eta\ph} &=\sprod{\gamma\otimes\eta}{H(\ph\otimes\eta)}\\
 %\sprod{\delta}{H_\ph\eta} &= \sprod{\ph\otimes\delta}{H(\ph\otimes\eta)}
      (1\otimes P_{\eta})H(\ph\otimes\eta) &= H_\eta\ph\otimes \eta\\
     (P_\ph\otimes 1) H (\ph\otimes\eta) &= \ph\otimes H_\ph\eta,
\end{align*}
respectively. With some abuse of notation we may write
$H_\eta=\sprod{\eta}{H\eta}/\sprod{\eta}{\eta}$ and $H_\ph=\sprod{\ph}{H\ph}/\sprod{\ph}{\ph}$ with
inner products are taken in $\HH_2$ and $\HH_1$ only. 

In the case where $H=H_0+W$ with 
$$
    H=H_0+W\qquad H_0 = H_1\otimes 1 + 1\otimes H_2
$$
it follows that $H_{0,\eta} - \expect{H_0}_{\ph\otimes \eta} = H_1- \expect{H_1}_\ph$ and $H_{0,\ph} - \expect{H_0}_{\ph\otimes \eta} = H_2- \expect{H_2}_\eta$. 
The Equations \eqref{dot-a}-\eqref{dot-eta} therefore become
\begin{align}
    i\dot{a} &= (\expect{H_1}_\ph+ \expect{H_2}_\eta +\expect{W}_{\ph\otimes\eta})a \label{dot-a2}\\
    i\dot{\ph} &= (H_1+ W_\eta - \expect{H_1}_\ph - \expect{W}_{\ph\otimes\eta})\ph \label{dot-ph2}\\
     i\dot{\eta} &= (H_2+ W_\ph -  \expect{H_2}_\eta -\expect{W}_{\ph\otimes\eta})\eta. \label{dot-eta2}
\end{align}
We now collect the scalar contributions to \eqref{dot-ph2} and \eqref{dot-eta2} in \eqref{dot-a2}. This is achieved
by a change of phase in $\ph,\eta$ that violates the gauge
condition \eqref{gauge} but preserves the norms. We conclude that
any solution $u=a(\ph\otimes\eta)$ to \eqref{DF} may be gauged in such
a way that $a,\ph,\eta$ solve the system
\begin{align}
    i\dot{a} &= -\expect{W}_{\ph\otimes\eta} a \label{dot-a3}\\
    i\dot{\ph} &= (H_1+ W_\eta)\ph \label{dot-ph3}\\
     i\dot{\eta} &= (H_2+ W_\ph)\eta \label{dot-eta3},
\end{align}
where the first equation immediately integrates to 
\begin{equation}\label{a-of-t}
     a(t) = a(0)\exp \left(i\int_0^t \expect{W}_{\ph\otimes \eta}\,ds\right).
\end{equation}

Let us prove the converse, i.e., that \eqref{dot-ph3} through \eqref{a-of-t} are sufficient for \eqref{DF}. Indeed, from $P(u)Hu=H_0u+P(u)Wu$ and the representation \eqref{Pu2} for $P(u)$ we see that \eqref{DF} becomes 
\begin{equation}\label{eq:DF-ham}
    i\dot u = \big[(H_1+ W_\eta) \otimes 1 + 1\otimes (H_2 + W_\ph) -\expect{W}_{\ph\otimes\eta}\big] u
\end{equation}
which is obviously satisfied if $a$, $\ph$, and $\eta$ solve \eqref{dot-a3} through \eqref{dot-eta3}. 

Note that the norms of $\ph$ and $\eta$ are conserved for solutions of \eqref{dot-ph3} and \eqref{dot-eta3} by the (assumed) symmetry of the operators $H_1$, $H_2$, and $W$. Hence, if $\|\ph(0)\|=1=\|\eta(0)\|$ then $\|\ph\|=1=\|\eta\|$ for all times and 
$\expect{W}_{\ph\otimes \eta} = \sprod{\ph\otimes\eta}{W(\ph\otimes \eta)}$.

%\begin{theorem}
%Let $\ph(t)\in\HH_1$ and $\eta(t)\in\HH_2$ be continuously
%differentiable solutions of \eqref{dot-ph3}, \eqref{dot-eta3} with
%$\|\ph(0)\|=1=\|\eta(0)\|$ and let $a(t)$ be defined by \eqref{a-of-t}. Then
%$u(t)=a(t)\ph(t)\otimes\eta(t)$ solves the Dirac-Frenkel equation
%\eqref{DF} with $P(u)$ defined by \eqref{Pu}.
%\end{theorem}

%----------------------------------------------------------------------------------------------------------------------------------------------------------------------------------------------------------------------------------------------

\section{The Dirac-Frenkel approximation for the polaron}
\label{DF-polaron}

In this section we derive various equivalent forms of the Dirac-Frenkel equations for the polaron and the manifold associated with the factorization of Hilbert space into particle and phonon spaces. The Landau-Pekar equations considered in \cite{FG2015} are one possible form of the Dirac-Frenkel equations. 

Let $\HH=L^2(\R^d)\otimes\FF$ where $\FF$ denotes the symmetric Fock space over $L^2(\R^d)$. Let $N$ denote the number operator in $\FF$, let $a^{*}(f)$ and $a(f)$ be the usual creation and annihilation operators associated with a function $f\in L^2(\R^d,dk)$, and let $\phi(f) = a(f)+a^{*}(f)$. We define (unitary) Weyl operators $W(f)$ by
$$
               W(f) = e^{-i\phi(if)} = \exp(a^{*}(f)-a(f)).
$$
It is useful and convenient to extend the notions of creation and annihilation operators to include operators $a^{*}(G)$ and $a(G)$ in $\HH$ where the map $k\mapsto G(k)$ takes values in the bounded operators on the particle space $L^2(\R^d)$. With these preparations we can defined the Fr\"ohlich Hamiltonian $H_{F}$ as the norm-resolvent limit as 
$\Lambda\to\infty$ of 
\begin{equation}\label{Fro-Ham}
     H_{F,\Lambda} = -\Delta + \omega_0 N + \sqrt{\alpha}\phi(G_\Lambda)
\end{equation}
where $\omega_0,\alpha>0$,
$$
     \phi(G_\Lambda) = \int \big(G_\Lambda(k)^{*}\otimes a(k) + G_\Lambda(k)\otimes a^{*}(k)\big)dk
$$
and $G_\Lambda(k)$ denotes the bounded operator $L^2(\R^d)$ given by multiplication with the function $G_{\Lambda,x}(k)= \chi_\Lambda(k)v(k)e^{-ikx}$ \cite{GriWue2016}. We choose $v\in L^2_{\rm loc}(\R^d)$ to be the positive function defined by 
$$
       \frac{1}{2|x|} = \int e^{ikx} |v(k)|^2\, dk.    %  2|v(k)|^2 = (2\pi)^{-d} \int e^{-ikx}\frac{1}{|x|}dx.
$$
This means that $v(k) \sim |k|^{-(d-1)/2}$, which is the usual choice for the form factor in two and three dimensions \cite{Devreese}. In particular we get the familiar
$$
       v(k) =\frac{1}{2\pi |k|},\qquad \text{for}\  d=3.
$$

While $H_{F,\Lambda}$ is an operator sum of the general form $H_0+W$ of Hamiltonians discussed in the previous section, this is not true for the resolvent limit $H_F$ \cite{GriWue2016}. We ignore this problem in the present section and we derive (well-posed) systems of effective equations for the evolution on the manifold of product stated introduced in Section~\ref{DF-abstract}. This procedure will be justified by the results of the next section.

In this spirit we formally set  $W=\sqrt{\alpha}\phi(G)$, where $G=G_{\Lambda=\infty}$, and we find
\begin{align*}
     W_\eta &=  \sqrt{\alpha}\int \big(G(k)^{*}\sprod{\eta}{a(k)\eta} + \text{h.c.}\big)dk\\
    W_\ph &=  \sqrt{\alpha}\int \big(\sprod{\ph}{G(k)^{*}\ph} a(k) +  \text{h.c.}\big)dk.
\end{align*}
These operators are of the form $W_\eta = V_\eta$ and
$W_\ph=\sqrt{\alpha}\phi(f_\ph)$ with functions 
\begin{align}\label{def-V}
    V_\eta(x) &= \sqrt{\alpha}\int \left(\overline{v(k)}e^{ikx}\sprod{\eta}{a(k)\eta}
      +  \text{h.c.} \right)dk\\
   f_{\ph}(k) &= \sprod{\ph}{G(k)\ph} = v(k) \int e^{-ikx}|\ph(x)|^2\,dx\label{def-f}
\end{align}
Hence the Dirac-Frenkel equations \eqref{dot-ph3} and \eqref{dot-eta3} become
\begin{align}
    i\dot{\ph} &= (-\Delta + V_\eta)\ph,\label{DF-phi}\\ % \ph|_{t=0}=\ph_0
   i\dot{\eta} &= (\omega_0 N + \sqrt{\alpha}\phi(f_t))\eta,\label{DF-eta} %  \qquad \eta|_{t=0}=\eta_0
\end{align}
and 
\begin{equation}\label{DF-a}
      a(t) = \exp \left( i\sqrt{\alpha}\int_0^t \sprod{\ph_s\otimes \eta_s}{\phi(G)(\ph_s\otimes \eta_s)}\,ds\right)
\end{equation}
provided that $\|\ph(0)\|=1=\|\eta(0)\|$ and $a(0)=1$. The definitions \eqref{def-V} and \eqref{def-f} with $\sqrt{\alpha}$ included in $V_\eta$ but not in $f_{\ph}$ will turn out convenient later where both $V$ and $f$ will become independent of $\alpha$.

We are now going to eliminate $\eta$ from \eqref{DF-phi} and \eqref{DF-eta} and we shall replace \eqref{DF-eta} by an equation for $V_\eta$.
The equation \eqref{DF-eta} can be solved for $\eta$ in spite of the fact
that it is non-autonomous.  To this end we pass to the interaction picture. That is, we define 
$\eta_{\rm int}(t) = e^{i\omega_{0}Nt} \eta(t)$ for which Equation \eqref{DF-eta} becomes 
$$
     i\dot\eta_{\rm int} = \sqrt{\alpha}\phi(e^{i\omega_0t}f_\ph))\eta_{\rm int}.
$$
Since the commutator 
$$
       [\phi(e^{i\omega_0t}f_t)), \phi(e^{i\omega_0s}f_s))] = 2i\Ima\sprod{f_t}{e^{-i\omega_0(t-s)}f_s}
$$
is a number, there is a formula for the propagator, which for $\eta(t)=e^{-i\omega_0Nt}\eta_{\rm int}(t)$ implies 
\begin{align}\label{eta-sol}
     \eta(t) &= e^{-iF(t)} e^{-i\omega_{0}Nt} W(J_t)\eta_0\\
  \text{where}\quad   J_t(k) &= -i\sqrt{\alpha}\int_0^t e^{i\omega_{0}s}f_s(k)\, ds,\\
    F(t) &= -\alpha\int_0^{t}ds_1 \int_0^{s_2} ds_2 \Ima\sprod{f_{s_1}}{e^{-i\omega_0(s_1-s_2)}f_{s_2}}.
\end{align}
Now let us compute $V_\eta$. From \eqref{eta-sol} it follows that 
$$
    \sprod{\eta}{a(k)\eta} = e^{-i\omega_{0}t}\Big(\sprod{\eta_0}{a(k)\eta_0} + J_t(k)\Big)
$$
and hence that $V_\eta = V_{0,t}+V_{\ph,t}$ where
\begin{eqnarray*}
     V_{0,t}(x) &=& \sqrt{\alpha} \int  \Big(\overline{v(k)}e^{ikx-i\omega_0 t}
     \sprod{\eta_0}{a(k)\eta_0} + {\mathrm h.c.} \Big)dk\\
    V_{\ph,t}(x) &=& \sqrt{\alpha} \int \Big(\overline{v(k)} e^{ikx-i\omega_0 t} J_t(k) + {\mathrm h.c.} \Big)dk\\
&=& -i\alpha \int_0^{t} ds\, e^{i\omega_0(s-t)} \int dk\, |v(k)|^2 \int e^{ik(x-y)}|\ph_s(y)|^2 \,dy + {\mathrm h.c.}\\
&=&  \alpha\int_0^{t} ds \sin(\omega_0(s-t)) \int \frac{|\ph_s(y)|^2}{|x-y|}\,dy.
\end{eqnarray*}

Thus the system \eqref{DF-phi}, \eqref{DF-eta} reduces to the nonlinear Schr\"odinger
equation
\begin{equation}\label{eff-NLS}
     \boxed{i\dot{\ph} = (-\Delta + V_{t})\ph,}
\end{equation}
where $V_t=V_{0,t}+V_{\ph,t}$.  The first part, $V_{0,t}$, is determined by the initial condition $\eta_0$ of the
phonons. It is the potential due to the freely evolving phonons
$e^{-i\omega_0 Nt}\eta_0$ and correspondingly $(\partial_t^2+\omega_0^2)V_{0,t}=0$. The
initial data at $t=0$ are determined by $\eta_0$ through  
$$
     \Big(V_0+i\omega_0^{-1}\partial_t V_0\Big) \Big|_{t=0} = \frac{\alpha}{\pi}\int \frac{dk}{|k|}
     e^{ikx} \sprod{\eta_0}{a(k)\eta_0}.
$$
On the other hand $V_{\ph,t}$, and hence $V_t=V_{0,t}+V_{\ph,t}$, solves the inhomogeneous equation 
\begin{equation}\label{poisson}
     \boxed{(\partial_t^2+\omega_0^2)V_{t} = -\alpha\omega_0 |\cdot|^{-1} * |\ph|^2}
\end{equation}
and $V_{\ph,t}|_{t=0} = 0= \partial_t V_{\ph,t}|_{t=0}$. 

Much of this paper is concerned with a particular solution to \eqref{eff-NLS}, \eqref{poisson} and hence to \eqref{DF-phi}, \eqref{DF-eta}. This particular solution to the system \eqref{eff-NLS}, \eqref{poisson} is of the form 
\begin{equation}\label{pekar-solution}
     \ph_t = e^{-i\lambda t}\ph_0,\qquad V(x) = -\frac{\alpha}{\omega_0}\int\frac{|\ph_0(y)|^2}{|x-y|}\, dy
\end{equation}
where $V$ does not depend on $t$. Then \eqref{eff-NLS} reduces to 
\begin{equation}\label{pekar-equation}
     \lambda \ph_0 = \left(-\Delta - \frac{\alpha}{\omega_0}\int \frac{|\ph_0(y)|^2}{|x-y|}\, dy\right)\ph_0, 
\end{equation}
which is the Euler-Lagrange equation associated with the Pekar functional 
\begin{equation}\label{Pekar}
     \EE_P(\ph)=\int |\nabla\ph(x)|^2\, dx - \frac{\alpha}{2\omega_0} \int\frac{|\ph(x)|^2|\ph(y)|^2}{|x-y|}\, dxdy
\end{equation}
with the constraint $\|\ph\|=1$. It is well-known that this functional has a positive minimizer and infinitely many (spherically symmetric) critical points with negative energy \cite{Lieb1977, Lions1980, Lions1984}. Any critical point $\ph_0$ of \eqref{Pekar} defines, through \eqref{pekar-solution}, a particular solution to \eqref{eff-NLS}, \eqref{poisson}. 

A solution to the system \eqref{DF-phi}, \eqref{DF-eta} that leads to \eqref{pekar-solution} is given by
\begin{equation}\label{solution2LP}
     \ph_t = e^{-i\lambda t}\ph_0,\qquad \eta_t = e^{-i\mu t}\eta_0
\end{equation}
where $\eta_0$ is the coherent state minimizing $\omega_0 N +\sqrt{\alpha}\phi(f)$ and $\mu$ is the minimum. That is,
\begin{equation}\label{eta-0}
     \eta_0 = W\left(\frac{\sqrt\alpha}{\omega_0}f\right)^{*}\Omega,\qquad \mu = -\frac{\alpha}{\omega_0}\|f\|^2
\end{equation}
where $f$ is given by \eqref{def-f} with $\ph=\ph_0$. Indeed, with this choice for $\eta_t$ we obtain from the definitions of $V_{0,t}$ and $V_{\ph,t}$, the expressions
\begin{align*}
      V_{0,t} &= - \frac{\alpha}{\omega_0} \cos(\omega_0 t) \int\frac{|\ph_0(y)|^2}{|x-y|}\, dy \\ 
  V_{\ph,t} &=  \frac{\alpha}{\omega_0}\big(\cos(\omega_0 t)-1\big) \int\frac{|\ph_0(y)|^2}{|x-y|}\, dy
\end{align*}
which add up to $V$ given in \eqref{pekar-solution}. That is, $V_{\eta_0}=V$.

It remains to compute $u_t=a(t)\ph_t\otimes \eta_t$, the solution to the Dirac-Frenkel equation \eqref{DF} associated with \eqref{solution2LP}. In view of $\sqrt{\alpha}\sprod{\ph_0\otimes \eta_0}{\phi(G)(\ph_0\otimes \eta_0)}= -2\frac{\alpha}{\omega_0}\|f\|^2 = 2\mu$, Equations \eqref{solution2LP} and \eqref{DF-a} imply that $a(t) = \exp(2\mu it)$ and hence
\begin{equation}\label{DF-solution}
   u_t = e^{-i E_p t}\ph_0\otimes \eta_0
\end{equation} 
where $E_p := \lambda-\mu = \EE_P(\ph_0)$ is the critical level of the Pekar functional.

%----------------------------------------------------------------------------------------------------------------------------------------------------------------------------

\section{Accuracy of the Dirac-Frenkel approximation}
\label{DF-accuracy}

By Proposition \ref{H-rescaling} there is a unitary transformation on the Fock space which turns the Fr\"ohlich Hamiltonian $H_F$ with $\omega_0=1$ into $\alpha^2 \Hres$, where
$\Hres$ is the norm resolvent limit of 
\begin{equation}\label{rescaled-ham}
     \Hrescut = -\Delta+\alpha^{-2}N +\alpha^{-1}\phi(G_\Lambda).
\end{equation}
The factor of $\alpha^2$ in the Schr\"odinger equation $i\partial_t\psi = \alpha^2 \Hres\psi$ is now removed by introducing a microscopic time scale $t_{\rm mic} = \alpha^2 t$, which will be denoted by $t$ in the following. Hence, effectively we replace \eqref{Fro-Ham} by \eqref{rescaled-ham}. The equations of the previous section are converted accordingly by the substitutions
$$
    \sqrt{\alpha}\to\alpha^{-1}, \qquad \omega_0\to \alpha^{-2}
$$
of the parameters $\sqrt{\alpha}, \omega_0$. This means that \eqref{DF-phi}, \eqref{DF-eta} and \eqref{DF-a} become 
\begin{align}
    i\dot{\ph} &= (-\Delta + V_\eta)\ph,\label{DF1}\\ % \ph|_{t=0}=\ph_0
   i\dot{\eta} &= (\alpha^{-2} N + \alpha^{-1}\phi(f_t))\eta,\label{DF2} %  \qquad \eta|_{t=0}=\eta_0
\end{align}
and 
\begin{equation}\label{DF3}
      a(t) = \exp \left( i\alpha^{-1}\int_0^t \sprod{\ph_s\otimes \eta_s}{\phi(G)(\ph_s\otimes \eta_s)}\right).
\end{equation}

In the previous section we found particular solutions to \eqref{DF-phi}, \eqref{DF-eta}. They correspond to particular solutions to \eqref{DF1}, \eqref{DF2} given by 
$\ph_t = e^{-i\lambda t}\ph_0$ and $\eta_t = e^{-i\mu t}\eta_0$, where $\ph_0$ is a critical point of the Pekar functional \eqref{Pekar} with $\alpha/\omega_0\to 1$, and 
\begin{equation}\label{def-eta0}
     \eta_0 = W(\alpha f)^{*}\Omega,
\end{equation}
is the minimizer of $\alpha^{-2} N + \alpha^{-1}\phi(f)$, it satisfies $a(k)\eta_0 = -\alpha f(k)\eta_0$. Equation \eqref{def-eta0} agrees with \eqref{eta-0} upon $\sqrt{\alpha},\omega_0\to\alpha^{-1}, \alpha^{-2}$. The function $f$ is defined by
\begin{equation}\label{def-f2}
     f(k) = v(k)\hat\rho(k)
\end{equation}
where $\hat \rho$ is the Fourier transform of $\rho = |\ph_0|^2$. Setting $V = - |\cdot|^{-1}*|\ph_0|^2$ it follows that 
\begin{align}
     (-\Delta +V)\ph_0 &= \lambda \ph_0\label{eq:eval1}\\
    \big( \alpha^{-2} N + \alpha^{-1}\phi(f) \big)\eta_0 &= \mu\eta_0\label{eq:eval2}
\end{align}
where $\mu = -\|f\|^2 = \frac{1}{2}\sprod{\ph_0}{V\ph_0}$.
%-------------------------------------------------------------------------------------------------------------------------------------------------------

Suppose now that $\ph_0$ is a minimizer of the Pekar function, rather than just any critical point. This ensures that  $\lambda$ is the lowest eigenvalue of $-\Delta+V$ \cite{AG2014}.
We want to compare
\begin{equation}\label{DF-approx}
    u_t = a(t)\ph_t\otimes \eta_t = e^{-iE_p t}\ph_0\otimes \eta_0,
\end{equation}
to the true evolution $\exp(-i\Hres t)(\ph_0\otimes \eta_0)$. To this end we define
$$
     \Heff := (-\Delta +V)\otimes 1 + 1\otimes (\alpha^{-2}N + \alpha^{-1}\phi(f)) +2\|f\|^2,
$$
which, formally, is nothing but the effective Hamiltonian \eqref{eq:DF-ham} defined in Section \ref{DF-abstract} with $\ph=\ph_0$ and $\eta=\eta_0$ as above. This operator depends on $\ph_0$ only, because $f$ is determined by $\ph_0$. From \eqref{eq:eval1}, \eqref{eq:eval2} and $E_p=\lambda+\|f\|^2$ we see that $\Heff (\ph_0\otimes \eta_0) = E_p (\ph_0\otimes \eta_0)$ and hence 
$$
      u_t = e^{-i\Heff t}(\ph_0\otimes \eta_0).
$$
We thus need to compare the time evolutions of $\ph_0\otimes \eta_0$ generated by $\Hres$ and $\Heff$ respectively. 
By Lemma~\ref{lm:coherent-gauge},
\begin{align*}
     H_\Lambda &:= W(\alpha f) \Hrescut W(\alpha f)^{*} = \Hrescut + V_\Lambda -\alpha^{-1}\phi(f) +\|f\|^2\\
     \tilde H &:= W(\alpha f) \Heff W(\alpha f)^{*} = -\Delta + V +\alpha^{-2}N  +\|f\|^2,
\end{align*}
%We recall from Section~\ref{DF-polaron}  that of $W_\eta=V$, $W_\ph= \alpha^{-1}\phi(f)$ and hence
%\begin{align}
 %  \sprod{\ph_0}{\phi(G)\ph_0}_{L^2} &= \phi(f) \label{W-f}\\
  % \alpha^{-1} \sprod{\eta_0}{\phi(G)\eta_0}_{\FF} &= V \label{W-V}
%\end{align}
%as identities of operators in $\FF$ and $L^2$, respectively. 
%Note that, in view of \eqref{def-eta0}, $V$ does \emph{not} depend on $\alpha$
%while $\eta_0$ and hence $\sprod{\eta_0}{\phi(f)\eta_0}$ depends on
%$\alpha$.  From  $P_u u=u$ and $H_{u} u = P(u)\Hres u$ we see that
%\begin{equation}\label{equal-E}
%   \sprod{u_t}{\Heff u_t} = \sprod{u_t}{\Hres u_t} = E_p.
%\end{equation}
which implies that
\begin{equation}\label{deltaHcut}
    \delta H_\Lambda := H_\Lambda - \tilde H = \alpha^{-1}(\phi(G_\Lambda)-\phi(f)) + (V_\Lambda - V),
\end{equation}
where $\|V_\Lambda - V\|_{\infty}\to 0$ as $\Lambda\to \infty$, by Lemma~\ref{lm:Vcut-V}. The operator $H = W(\alpha f) \Hres W(\alpha f)^{*} $ is the norm-resolvent limit of $H_\Lambda$ and therefore
\begin{equation}\label{WU}
    W(\alpha f) \left( e^{-i\Hres t} - e^{-i\Heff t} \right)  W(\alpha f)^{*} = \left(e^{-iHt} - e^{-i\tilde Ht} \right).
\end{equation}
With these preparations we may now state and prove our main result:

%---------------------------------------------   proposition  -------------------------------------------------
\begin{theorem}\label{main}
Suppose $\ph_0$ is a minimizer of the Pekar functional, $\eta_0$ is defined by \eqref{def-eta0} and \eqref{def-f2}, and $u_0=\ph_0\otimes\eta_0$. Then there is a constant $C$ such that for all $\alpha\geq 1$ and all $t\in\R$,
$$
     \|e^{-iH_{\alpha}t}u_0 - e^{-i E_p t}u_0\|^2 \leq C \frac{|t|}{\alpha^2}.      
$$
\end{theorem}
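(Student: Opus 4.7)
My plan is to work in the Weyl gauge of \eqref{WU}. Setting $H := W(\alpha f) H_\alpha W(\alpha f)^*$ and $\psi_0 := W(\alpha f)u_0 = \ph_0 \otimes \Omega$, the bound to prove reduces, by unitarity, to
\begin{equation*}
 \|K(t)\|^2 \leq C|t|/\alpha^2, \qquad K(t):= e^{iHt}\psi_0 - e^{iE_p t}\psi_0.
\end{equation*}
The crucial structural fact is that $\tilde H\psi_0 = E_p\psi_0$, so $(H - E_p)\psi_0 = \delta H\psi_0$. Throughout I would work first with the regularized Hamiltonian $H_\Lambda$, for which $\delta H_\Lambda\psi_0$ is a genuine vector in $\HH$, and pass to $\Lambda\to\infty$ at the end using the norm-resolvent convergence of $H_\Lambda$ and $\|V_\Lambda - V\|_\infty\to 0$.

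Duhamel's formula gives $K(t) = i\int_0^t e^{iH(t-s)}\delta H\,e^{iE_p s}\psi_0\,ds$. Computing the squared norm as a double time integral, changing variables to $u = r-s$, and invoking the spectral theorem for $H$ with measure $d\nu(\lambda) := d\|E_H(\lambda)\delta H\psi_0\|^2$ one arrives at the explicit formula
\begin{equation*}
  \|K(t)\|^2 = \int \frac{4\sin^2((\lambda-E_p)t/2)}{(\lambda-E_p)^2}\,d\nu(\lambda).
\end{equation*}
The elementary bound $\sin^2(\omega t/2)/\omega^2 \leq |t|/(2|\omega|)$ (from $\min\{1/\omega^2,t^2/4\}\leq (1/\omega^2\cdot t^2/4)^{1/2}$) then reduces the theorem to the single uniform matrix-element estimate
\begin{equation*}
  \sprod{\delta H\psi_0}{|H-E_p|^{-1}\delta H\psi_0} \leq C/\alpha^2.
\end{equation*}

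Because $a(G)\psi_0 = a(f)\psi_0 = 0$ and $V_\Lambda - V\to 0$, one has in the limit $\delta H\psi_0 = \alpha^{-1}a^*(\tilde G)\psi_0$ with $\tilde G(k) := G(k) - f(k) = v(k)(e^{-ikx} - \hat\rho(k))$ and $\tilde G(k)\ph_0\in P_{\ph_0}^\perp L^2$ for every $k$. The factor $\alpha^{-2}$ is thus explicit, and the task becomes showing that $\sprod{a^*(\tilde G)\psi_0}{|H-E_p|^{-1}a^*(\tilde G)\psi_0}$ is bounded uniformly in $\alpha\geq 1$ and in the cutoff. This is the main obstacle of the proof, since $\|a^*(\tilde G)\psi_0\|^2 = \int |v(k)|^2 (1 - |\hat\rho(k)|^2)\,dk$ diverges in $d=3$, so no crude operator-norm bound can work.

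To establish the matrix-element bound I would compare $|H-E_p|^{-1}$ with $|\tilde H-E_p|^{-1}$ via the second resolvent identity, justified for $\alpha\geq 1$ by Lemma~\ref{lm:FSch}, which makes the perturbation $\alpha^{-1}(\phi(G)-\phi(f))$ form-small relative to the free part and thus controls both the comparison and the passage $\Lambda\to\infty$. On the one-photon subspace where $a^*(\tilde G)\psi_0$ lives, $\tilde H - E_p$ acts as $(-\Delta + V - \lambda) + \alpha^{-2}$; restricted to $P_{\ph_0}^\perp L^2$ this is bounded below by the positive Pekar spectral gap $\lambda_1 - \lambda$ uniformly in $\alpha$, while standard Fourier arguments show that the resolvent $(-\Delta + V - \lambda + \alpha^{-2})^{-1}$ contributes a $|k|^{-2}$ decay on the plane-wave-like state $e^{-ikx}\ph_0$. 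Combined with $|v(k)|^2\sim|k|^{-2}$ this yields an integrand of order $|k|^{-4}$ at infinity, which is integrable in $d=2,3$ and produces the required uniform constant.
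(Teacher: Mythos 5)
Your opening reduction is correct and cleanly executed: the Duhamel formula, the double time integral, the spectral theorem, and the bound $4\sin^2(\omega t/2)/\omega^2\leq 2|t|/|\omega|$ do give
$\|K(t)\|^2\leq 2|t|\,\sprod{\delta H\psi_0}{|H-E_p|^{-1}\delta H\psi_0}$,
and your observation that $\delta H\psi_0=\alpha^{-1}a^{*}(\tilde G)\psi_0$ with $\tilde G(k)\ph_0\perp\ph_0$ (so the vector lies in $\ran Q_0$) correctly isolates the factor $\alpha^{-2}$. The fatal gap is that the measure $\nu$ in your spectral representation is the spectral measure of the \emph{full} Hamiltonian $H$, so the weight $|H-E_p|^{-1}$ refers to $H$ and not to $\tilde H$. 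The Pekar energy $E_p=\sprod{u_0}{\Hres u_0}$ lies above $\inf\sigma(\Hres)=\inf\sigma(H)$, and the spectrum of $\Hres$ is a half-line, so $E_p$ is embedded in the essential spectrum of $H$: there is no gap, $|H-E_p|^{-1}$ is unbounded, and finiteness of $\int|\lambda-E_p|^{-1}d\nu(\lambda)$ is a limiting-absorption / local-regularity statement about the interacting Fr\"ohlich Hamiltonian at an embedded energy, which none of the tools in the paper provide. Your proposed repair via the second resolvent identity does not close this: that identity is valid only for $z$ in the common resolvent set, and iterating it at $z=E_p\pm i0$ reintroduces $(H-E_p)^{-1}$ on the right. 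Nor can a Neumann series be closed, because on the orthogonal complement of $\psi_0$ (as opposed to $\ran Q_0$) the reduced resolvent of $\tilde H$ at $E_p$ has norm of order $\alpha^{2}$ --- the states $\ph_0\otimes(n\text{-phonon})$ sit at distance $\alpha^{-2}n$ from $E_p$ --- so the perturbation $\alpha^{-1}(\phi(G)-\phi(f))$ is not small against it; the form-smallness from Lemma~\ref{lm:FSch} controls resolvents away from the spectrum, not at an embedded energy.

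The paper's proof is engineered precisely to avoid the resolvent of $H$. It keeps the Duhamel expansion asymmetric: one propagator remains $e^{-i\tilde Hs}\psi_0=e^{-iE_ps}\psi_0$, so after replacing $\delta H_\Lambda\psi_0$ by $Q_0\delta H_\Lambda\psi_0$ (up to $o(1)$ in $\Lambda$) one can write $e^{i(\tilde H-E_p)s}Q_0=-i\tfrac{d}{ds}e^{i(\tilde H-E_p)s}\tilde RQ_0$ and integrate by parts in $s$. The only resolvent that appears is $\tilde R=(\tilde H-E_p)^{-1}Q_0$, uniformly bounded in $\alpha$ by the Pekar spectral gap $\lambda_1-\lambda$, and the integration by parts generates the second factor of $\delta H_\Lambda$, hence the second power of $\alpha^{-1}$; the remaining matrix elements are handled by Lemma~\ref{lm:FSch} and Lemma~\ref{lm:Lieb}. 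To rescue your symmetric route you would have to prove the uniform bound on $\sprod{\delta H\psi_0}{|H-E_p|^{-1}\delta H\psi_0}$ for the interacting $H$ directly, which is essentially a new and harder theorem.
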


\begin{proof}
We shall prove that 
$$
     \|e^{-iH_\Lambda t}\psi_0 - e^{-i\tilde Ht}\psi_0\|^2 \leq C \left(\frac{t}{\alpha^2}\right) + o(1)\qquad (\Lambda\to \infty)
$$
for $\psi_0 = \ph_0\otimes\Omega$ and all $t>0$. Since $H_\Lambda\to H$ as $\Lambda\to\infty$ in the norm-resolvent sense, and hence $e^{-iH_\Lambda t}\psi_0 \to e^{-iHt}\psi_0$, the desired estimate then follows from \eqref{WU} and $u_0 = W(\alpha f)^{*}\psi_0$. In the following $o(1)$ will always stand for a remainder vanishing in the limit $\Lambda\to\infty$.

We begin with
\begin{eqnarray}
  \lefteqn{ \|e^{-iH_\Lambda t}\psi_0 - e^{-i\tilde Ht}\psi_0\|^2}\nonumber \\
   &=& 2\Ima\int_0^t \sprod{e^{-i H_\Lambda s}\psi_0}{\delta H_\Lambda e^{-i\tilde Hs}\psi_0}\, ds\nonumber\\
    &=& 2\Ima\int_0^t \sprod{e^{-i H_\Lambda s}\psi_0-e^{-i\tilde Hs}\psi_0}{\delta H_\Lambda e^{-i\tilde Hs}\psi_0}\, ds\nonumber\\
    &=& 2\Ima\int_0^t \sprod{e^{i\tilde Hs}e^{-i H_\Lambda s}\psi_0-\psi_0}{e^{i(\tilde H - E_p)s}\delta H_\Lambda \psi_0}\, ds.\label{eq1}
\end{eqnarray}
where  $\tilde H\psi_0 = E_p\psi_0$ was used in the last equation. 

Let $P_0:=P_{\ph_0}\otimes P_\Omega$ and $ Q_0 := P_{\ph_0}^{\perp}\otimes P_{\Omega}^{\perp}$. Then 
\begin{equation} \label{Q-delta-H}
       \delta H_\Lambda \psi_0 = P_0^{\perp}  \delta H_\Lambda \psi_0 + o(1)
       = Q_0  \delta H_\Lambda \psi_0 + o(1)
\end{equation}
because, first, 
$$
    \sprod{\psi_0}{\delta H \psi_0} = \sprod{\ph_0}{(V_\Lambda - V)\ph_0} = o(1)
$$
and, second,
\begin{align*}
     (P_{\ph_0}^{\perp}\otimes P_{\Omega})\delta H_\Lambda \psi_0 &= P_{\ph_0}^{\perp}(V_\Lambda - V) \ph_0\otimes \Omega = o(1)\\
     (P_{\ph_0}\otimes P_{\Omega}^{\perp}) \delta H_\Lambda \psi_0 & = \alpha^{-1}\ph_0\otimes (f_\Lambda-f) = o(1),
\end{align*}
where $f_\Lambda(k) := \sprod{\ph_0}{a(G_\Lambda (k))\ph_0} = f(k)\chi_\Lambda(k)$. We have used that, by Lemma~\ref{lm:Vcut-V},  $\|V_\Lambda-V\|_{\infty}\to 0$, $\|f_\Lambda-f\|_2\to 0$, and
that $P_0^{\perp} = (P_{\ph_0}^{\perp}\otimes P_{\Omega}) + (P_{\ph_0}\otimes P_{\Omega}^{\perp}) + (P_{\ph_0}^{\perp}\otimes P_{\Omega}^{\perp})$.

The eigenvalue  $E_p$ of $\tilde H$ associated to $\psi_0=\ph_0\otimes \Omega$ is isolated from the spectrum of  $\tilde H\restricted \ran P_{\ph_0}^{\perp}$ by a gap of the size of the  
gap between the first and the second eigenvalue of $-\Delta+V$. Like $V$ this gap is independent of $\alpha$. Therefore $(\tilde H - E_p)\restricted \ran Q_0$ has a resolvent $\tilde R=(\tilde H - E_p)^{-1}Q_0$ whose norm has a finite bound that is independent of $\alpha$. In the integrand of \eqref{eq1} we now use \eqref{Q-delta-H}, we then write
\begin{align*}
   e^{i\tilde Hs}e^{-i H_\Lambda s}\psi_0 - \psi_0 &= -i\int_0^s e^{i\tilde H\tau} \delta H_\Lambda e^{-i H_\Lambda\tau}\psi_0\, d\tau\\
   e^{i(\tilde H-E_p)s}Q_0 &= -i\frac{d}{ds} e^{i(\tilde H-E_p)s}\tilde{R} Q_0,
\end{align*}
and we integrate by parts to get
\begin{eqnarray}
  \lefteqn{ \|e^{-iH_\Lambda t}\psi_0 - e^{-i\tilde Ht}\psi_0\|^2 + o(1)}\nonumber \\
  &=& 2\Ima \left\langle \int_0^t e^{i\tilde Hs}\delta H_\Lambda e^{-i H_\Lambda s}\psi_0\, ds, e^{i(\tilde H-E_p)t} \tilde{R} Q_0
    \delta H_\Lambda \psi_0\right\rangle\nonumber \\
  & & -2\Ima \int_0^t \sprod{e^{i\tilde Hs}\delta H_\Lambda e^{-iH_\Lambda s}\psi_0}{e^{i(\tilde H-E_p)s}\tilde{R}Q_0 \delta H_\Lambda \psi_0}\, ds\nonumber\\
  & =& A_\Lambda(t) - B_\Lambda (t).\label{def-AB}
\end{eqnarray}
Both, $A_\Lambda (t)$ and $B_\Lambda(t)$ contain two factors of $\delta H_\Lambda$ and thus two factors of $\alpha^{-1}$. We therefore expect that both $A_\Lambda (t)$ and $B_\Lambda(t)$ satisfy the desired estimate. To prove this we first note that 
\begin{equation}\label{deltaHpsi}
     Q_0 \delta H_\Lambda \psi_0 = \alpha^{-1} Q_0 a^{*}(G_\Lambda)\psi_0,
\end{equation}
which follows from $P_\Omega^{\perp}(V_\Lambda-V)\psi_0=0$, $P_{\ph_0}^{\perp}\phi(f)\psi_0=0$, and $a(G_\Lambda)\psi_0=0$. Let $\psi_s := e^{-iH_\Lambda s}\psi_0$. By \eqref{deltaHcut} and \eqref{deltaHpsi}, 
\begin{align*}
    B_\Lambda(t) = &\alpha^{-2} 2\Ima\int_0^{t} \sprod{(\phi(G_\Lambda)-\phi(f)) \psi_s}{\tilde R Q_0 a^{*}(G_\Lambda)\psi_0}e^{-iE_ps}\, ds\\
         &+ \alpha^{-1} 2\Ima\int_0^{t} \sprod{(V_\Lambda - V) \psi_s}{\tilde R Q_0 a^{*}(G_\Lambda)\psi_0}e^{-iE_ps}\, ds.
\end{align*}
The second term vanishes in the limit $\Lambda\to \infty$ thanks to Lemma~\ref{lm:Lieb} and Lemma~\ref{lm:Vcut-V}.
From $\phi(G_\Lambda) = a(G_\Lambda)+a^{*}(G_\Lambda)$ we get two contributions to the integrand of the first term. For the contribution of $a^{*}(G_\Lambda)$ we obtain, using $\|\psi_s\|=\|\psi_0\|=1$, Lemma~\ref{lm:FSch} and Lemma~\ref{lm:Lieb},
\begin{eqnarray*}
\left| \sprod{  a^{*}(G_\Lambda)\psi_s}{\tilde R Q_0 a^{*}(G_\Lambda)\psi_0} \right|
   &\leq & \| a(G_\Lambda) \tilde R Q_0 a^{*}(G_\Lambda)\psi_0 \|\\
   &\leq & C_v \| (1+p^2)^{1/2}N^{1/2} \tilde R Q_0 a^{*}(G_\Lambda)\psi_0 \|\\
   &\leq & C_v \| (1+p^2)^{1/2}\tilde R^{1/2} Q_0\| \|  \tilde R^{1/2} Q_0 a^{*}(G_\Lambda)\psi_0 \|\\
   &\leq & C \|\psi_0\|.
\end{eqnarray*}
Here and below it is used that $ \tilde R Q_0 a^{*}(G_\Lambda)\psi_0 $ is a one-phonon state. For the contribution of $a(G_\Lambda)$ to the integrand we obtain
by using Lemma~\ref{lm:FSch} and Lemma~\ref{lm:Lieb},
\begin{eqnarray*}
  \lefteqn{\left| \sprod{  a(G_\Lambda)\psi_s}{\tilde R Q_0 a^{*}(G_\Lambda)\psi_0} \right|} \\
&\leq & \|(N+1)^{-1/2} a(G_\Lambda) \psi_s\| \| (N+1)^{1/2} \tilde R Q_0 a^{*}(G_\Lambda)\psi_0 \|\\
&\leq & C_v \sqrt{2} \|(1+p^2)^{1/2} \psi_s\| \| \tilde R Q_0 a^{*}(G_\Lambda)\psi_0 \|\\
&\leq & C \|(1+p^2)^{1/2} \psi_s\|, 
\end{eqnarray*}
where, by Lemma~\ref{H-bounds}, 
\begin{align*}
  \|(1+p^2)^{1/2} \psi_s\|^2 &\leq 2\sprod{\psi_s}{(H_\Lambda+C)\psi_s}\\ 
  &=2\sprod{\psi_0}{(H_\Lambda+C)\psi_0} = \sprod{\ph_0}{(-\Delta + V_\Lambda +C)\ph_0}
\end{align*}
which is bounded uniformly in $\Lambda>0$. Since the contribution of $\phi(f)$ to the integrand of  $B_\Lambda(t)$ can be estimated in the same way, we conclude that $|B_\Lambda(t)| \leq  C t \alpha^{-2}$. In a very similar way one shows that $|A_\Lambda(t)| \leq C t \alpha^{-2}$.
\end{proof}

%--------------------------------------------- general initial states  ---------------------------------------------------------------

\section{More general initial states}
\label{DF-general}

We now use the method developed in the previous sections for comparing the Dirac-Frenkel evolution with the true quantum evolution of more general initial states than those considered before. These initial states are still of the product form $u_0 = \ph_0\otimes \eta_0$ with $\ph_0\in L^2(\R^3)$ and a coherent state $\eta_0\in\FF$ but in addition we only require that 
\begin{align*}
     \ph_0 \in H^2(\R^3)\quad \text{and}\quad \eta_0 = W(\alpha g)^{*}\Omega,\qquad g\in L^2(\R^3,(1+k^2)dk).
\end{align*} 
For such initial data the system \eqref{DF1},  \eqref{DF2}  has a unique global solution $\ph_t$, $\eta_t$, $t\in\R$, with $\ph_t\in H^2(\R^3)$ depending continuously on time \cite{FG2015,GSS2016}.
This well-posedness result is only available in $d=3$ space dimensions so far. Let  $a(t)$ be defined by \eqref{DF3} and let 
$$
       u_t:= a(t)\ph_t\otimes \eta_t, \qquad t \in \R.
$$ 
Then the following theorem holds true:

\begin{theorem}\label{main-2}
There exists a constant $C\in\R$ such that for all $\alpha\geq 1$ and all $t\in\R$,
$$
     \|e^{-iHt}u_0 - u_t\|^2 \leq C \left | \frac{t}{\alpha}\right|.
$$
\end{theorem}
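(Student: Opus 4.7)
The plan is to start from the general error formula from Section~\ref{DF-abstract}: setting $\Phi_s := u_s - e^{-iHs}u_0$ (whose norm agrees with the quantity in the theorem), differentiation of $\|\Phi_t\|^2$ together with $\dot u_t = -iP(u_t)Hu_t$ yields
\begin{equation*}
    \|\Phi_t\|^2 = 2\int_0^t \Ima\sprod{P(u_s)^\perp H u_s}{\Phi_s}\,ds.
\end{equation*}
The goal is to show that the integrand is bounded by $C\alpha^{-1}$ uniformly in $s\in\R$; the theorem then follows by integration.

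I would first compute $P(u_s)^\perp Hu_s$ explicitly. The kinetic vectors $(-\Delta\ph_s)\otimes\eta_s$ and $\ph_s\otimes(\alpha^{-2}N\eta_s)$ both lie in $T_{u_s}M$ and are hence annihilated by $P(u_s)^\perp$. For the interaction $\alpha^{-1}\phi(G)$, the key structural observation is that the Dirac-Frenkel equation~\eqref{DF2} preserves the coherent-state form of $\eta_0$: the explicit formula~\eqref{eta-sol}, translated to the rescaled parameters of Section~\ref{DF-accuracy}, yields $\eta_s = e^{i\theta_s}W(h_s)\Omega$ with $h_s = e^{-i\alpha^{-2}s}(J_s - \alpha g)\in L^2$. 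In particular $a(k)\eta_s = h_s(k)\eta_s$, so the annihilation contribution to $\phi(G)(\ph_s\otimes\eta_s)$ is a product state in $T_{u_s}M$ and vanishes under $P(u_s)^\perp$. Only the creation part remains:
\begin{equation*}
    P(u_s)^\perp H u_s = \alpha^{-1}a(s)\,(P_{\ph_s}^\perp\otimes P_{\eta_s}^\perp)\,a^*(G)(\ph_s\otimes\eta_s).
\end{equation*}

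Next I would Weyl-conjugate to reduce $\eta_s$ to the Fock vacuum. Using $P_{\eta_s}^\perp a^*(k)\eta_s = e^{i\theta_s}W(h_s)|k\rangle$ and setting $\tilde\Phi_s := (1\otimes W(h_s)^*)\Phi_s$, one obtains
\begin{equation*}
    \sprod{P(u_s)^\perp Hu_s}{\Phi_s} = \alpha^{-1}\overline{A(s)}\,\sprod{\ph_s\otimes\Omega}{a(G)\,(P_{\ph_s}^\perp\otimes 1)\tilde\Phi_s},
\end{equation*}
with $|A(s)|=1$. Because $(N+1)^{1/2}(\ph_s\otimes\Omega)=\ph_s\otimes\Omega$ has unit norm, Lemma~\ref{lm:FSch} in the form $\|(N+1)^{-1/2}a(G_\Lambda)\psi\| \leq C_v\sqrt{2}\|(1+p^2)^{1/2}\psi\|$ used in the proof of Theorem~\ref{main} gives
\begin{equation*}
    |\sprod{P(u_s)^\perp H u_s}{\Phi_s}| \leq C\alpha^{-1}\,\|(1+p^2)^{1/2}(P_{\ph_s}^\perp\otimes 1)\tilde\Phi_s\|.
\end{equation*}

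The remaining task is a bound on the right-hand side that is uniform in $s$. Since $(1+p^2)^{1/2}P_{\ph_s}(1+p^2)^{-1/2}$ is rank one with operator norm $\leq \|\ph_s\|_{H^1}$, it suffices to control $\|(1+p^2)^{1/2}\tilde\Phi_s\|$, which by the triangle inequality is bounded by $\|(1+p^2)^{1/2}\ph_s\|+\|(1+p^2)^{1/2}e^{-iHs}u_0\|$. The first summand equals $\|\ph_s\|_{H^1}$, uniformly bounded by the global $H^2$-well-posedness of the Dirac-Frenkel system cited just before the theorem. For the second, the form inequality $-\Delta \leq H+c$ (a consequence of Lemma~\ref{H-bounds}) combined with energy conservation gives $\|(1+p^2)^{1/2}e^{-iHs}u_0\|^2 \leq \sprod{u_0}{(1+H+c)u_0} = 1+E_0+c$, uniformly in $s$ and $\alpha\geq 1$. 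Inserting this into the preceding estimate and integrating over $s\in[0,t]$ yields $\|\Phi_t\|^2 \leq C|t|/\alpha$. The main technical obstacle I anticipate is the ultraviolet accounting: $a^*(G)(\ph_s\otimes\eta_s)$ is not an element of $\HH$, so the identities above must be interpreted through the cut-off Hamiltonian $H_{\alpha,\Lambda}$ with $\Lambda\to\infty$ performed at the end, exactly as in the proof of Theorem~\ref{main}; since all estimates above are uniform in $\Lambda$ by Lemmas~\ref{lm:FSch} and~\ref{H-bounds}, this limit is justified by the same soft arguments used there.
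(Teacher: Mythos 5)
Your proposal is correct and follows essentially the same route as the paper: the identity $P(u_s)^\perp Hu_s=(H-H_s)u_s$ together with $\sprod{P(u_s)^\perp Hu_s}{u_s}=0$ makes your error formula literally the paper's Duhamel identity, and both arguments then rest on the same three ingredients --- the coherent-state structure of $\eta_s$ killing the annihilation part, Weyl conjugation to the vacuum, and Lemma~\ref{lm:FSch} plus Lemma~\ref{H-bounds} with energy conservation for the $(1+p^2)^{1/2}$-weighted bounds. The only point to tighten is your justification of $\sup_s\|\ph_s\|_{H^1}<\infty$: global $H^2$ well-posedness gives continuity in $t$ but not a bound uniform over all $t\in\R$ and $\alpha\geq 1$; the paper instead derives this from conservation of the Dirac--Frenkel energy via $\sprod{u_s}{Hu_s}\geq \tfrac{1}{2}\|\nabla\ph_s\|^2-C$, an argument you already have at hand since you use energy conservation for the other factor.
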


\noindent
\emph{Remark.} The constant $C$ does not depend on $g$ and hence the $\alpha$-dependence of $\eta_0$ is actually immaterial.

\begin{proof}
As in the proof of Theorem~\ref{main} we should estimate $ \|e^{-iH_{\Lambda}t}u_0 - u_t\|$ for $\Lambda<\infty$ and then let $\Lambda\to\infty$. This could easily be done, but it would obscure the structure and simplicity of the proof. We therefore proceed formally setting $\Lambda=\infty$. From the definition of $u_t$ it follows that $i\dot u_s = H_s u_s$ with 
\begin{equation}\label{def-Hs}
    H_s := -\Delta +V_{\eta_s} +\alpha^{-2}N +\alpha^{-1}\phi(f_s) - \alpha^{-1}\sprod{\ph_s\otimes \eta_s}{\phi(G)\ph_s\otimes \eta_s}
\end{equation}
where $f_s(k) = \sprod{\ph_s}{G(k)\ph_s}=v(k)\hat\rho_s(k)$ and $\rho_s=|\ph_s|^2$. We recall from Section~\ref{DF-polaron} that the equation \eqref{DF2} for $\eta_s$ can be solved explicitly in terms of $\ph_s$ with the result 
\begin{equation}\label{evolve-eta}
     \eta_s = e^{-iF(s)} W(h_s)\Omega
\end{equation}
with some real-valued function $F(s)$ and $h_s\in L^2(\R^d)$, which will drop out in the end. It is important however that the evolved state is again a coherent state. 
Writing $V_{\eta_s}(x)$ as $\sprod{\eta_s}{\phi(G_x)\eta_s}$ and $\sprod{\ph_s\otimes \eta_s}{\phi(G)\ph_s\otimes \eta_s} $ as $\sprod{\eta_s}{\phi(f_s)\eta_s}$ in \eqref{def-Hs} we obtain
\begin{equation}\label{diff-ham}
     H-H_s = \frac{1}{\alpha}\left(\phi(G) - \phi(f_s)\right) -  \frac{1}{\alpha} \sprod{\eta_s}{\phi(G)\eta_s} - \frac{1}{\alpha} \sprod{\eta_s}{\phi(f_s)\eta_s}.
\end{equation}
where the index $x$ has been dropped again. From \eqref{evolve-eta} we see that 
\begin{align*}
     W(h_s)^{*}\phi(G)W(h_s)&= \phi(G) + \sprod{\eta_s}{\phi(G)\eta_s}\\
     W(h_s)^{*}\phi(f_s)W(h_s) &= \phi(G) + \sprod{\eta_s}{\phi(f_s)\eta_s}
\end{align*}
which, in view of \eqref{diff-ham}, leads to 
$$
      W(h_s)^{*}(H-H_s )W(h_s) = \frac{1}{\alpha} (\phi(G) - \phi(f_s)).
$$

Writing $u_s = W(h_s)(\ph_s\otimes \Omega)e^{-iF(s)}$ we conclude that 
\begin{align*}
   \|e^{-iHt}u_0 - u_t\|^2 &= 2\Ima \int_0^t \sprod{e^{-iHs}u_0}{(H-H_s)u_s}\, ds\\
   &= 2\Ima \int_0^t \sprod{W(h_s)^{*} e^{-iHs}u_0}{W(h_s)^{*}(H-H_s)W(h_s) \ph_s\otimes \Omega}e^{-iF(s)}\, ds\\
   &= \frac{2}{\alpha}\Ima \int_0^t \sprod{W(h_s)^{*} e^{-iHs}u_0}{(\phi(G) - \phi(f_s)) \ph_s\otimes \Omega}e^{-iF(s)}\, ds,
\end{align*}
where the annihilation operators $a(G)$ and $a(f_s)$ give no contribution to the integrand. For the contribution due to $a^{*}(f_s)$ we have the bound
$$
     \frac{2t}{\alpha} \sup_s\|f_s\| \|\ph_s\|
$$
where $\|\ph_s\| = \|\ph_0\|=1$ and
\begin{align*}
     \|f_s\|^2 &= \int |v(k)|^2|\hat\rho_s(k)|^2\, dk\\
     &= \frac{1}{2} \int\frac{\rho_s(x)\rho_s(y)}{|x-y|}\, dxdy \leq C \|\nabla \ph_s\|^2
\end{align*}
which is bounded uniformly in $s$, as we show below. The contribution of  $a^{*}(G)$ to the integrand is bounded above by
\begin{eqnarray*}
   \lefteqn{| \sprod{W(h_s)e^{-iHs}u_0}{a^{*}(G) \ph_s\otimes \Omega}|} \\ &\leq \|(1+p^2)^{1/2} e^{-iHs}u_0\| \cdot \|(1+p^2)^{-1/2}a^{*}(G)\ph_s\otimes \Omega\|=: A_s B_s.
\end{eqnarray*}
With the help of Lemma \ref{H-bounds} we see that 
$$
       A_s^2 \leq 2(\|\nabla\ph_0\|^2 +C)
$$
and by part (b) of Lemma \ref{lm:Lieb},
$$
     B_s \leq C (\|\nabla\ph_s\| + \|\ph_s\| ).
$$

It remains to prove that $\sup_s \|\nabla \ph_s\| <\infty$. This follows from 
$$
    \sprod{u_s}{H u_s} \geq \|\nabla\ph_s\|^2 -  \frac{1}{2} \int\frac{\rho_s(x)\rho_s(y)}{|x-y|}\, dxdy \geq \frac{1}{2}\|\nabla\ph_s\|^2 - C
$$
and the general fact that Dirac-Frenkel evolution conserves the energy, that is $ \sprod{u_s}{Hu_s} =  \sprod{u_0}{Hu_0}$, see Section~\ref{DF-abstract}.
\end{proof}

%--------------------------------------------- N-polarons  ---------------------------------------------------------------

\section{Extension to $N$ polarons}
\label{sec:N-polaron}

The analysis of Section~\ref{DF-accuracy} can be carried out equally for $N$-polaron systems, provided the parameters are in a range where the Pekar functional for $N$ polarons, the so called Pekar-Tomasevich functional, has a minimizer. We now elaborate on these remarks in some mathematical detail, but we shall mostly suppress the UV-cutoff for notational simplicity. As in the previous section we here depend on results that are only available in $d=3$ space dimensions so far.

The Fr\"ohlich Hamiltonian for $N$ polarons in $\R^3$ is of the form 
$$
   H_N= \sum_{j=1}^{N}\left(-\Delta_{x_j} +\alpha^{-1}\phi(G_j)\right)+ \sum_{i<j}\frac{U}{|x_i-x_j|}+\alpha^{-2}\Nph
$$
and it acts on the Hilbert space $\HH=\HH_{part}\otimes \FF$ where $\HH_{part}$ may be the symmetric, the antisymmetric, of the full $N$-fold product of $N$ copies of $L^2(\R^3)$. $G_j(k)$ denotes the bounded operator on $\HH_{part}$ defined by multiplication with $e^{-ikx_j}v(k)$ where $x_j\in \R^3$ denotes the position of the $j$th particle and $v(k)=1/(2\pi|k|)$. The operator $H_N$ is of the familiar form $H_N= H_0+W$ where $H_0 = H_1\otimes 1 + 1\otimes H_2$ is the sum of operators on $\HH_{part}$ and $\FF$ respectively. The Dirac-Frenkel equations  \eqref{dot-ph3} and \eqref{dot-eta3} now take the form
\begin{align}
    i\dot{\ph} &= \Big(\sum_{j=1}^N(-\Delta_{x_j}) + \sum_{i<j}\frac{U}{|x_i-x_j|} + V_\eta(x_1,\ldots,x_N)\Big)\ph,\label{DFN-phi}\\
   i\dot{\eta} &= (\alpha^{-2} N + \alpha^{-1}\phi(f_t))\eta,\label{DFN-eta} 
\end{align}
where
$$
     V_\eta(x_1\ldots,x_N) := \alpha^{-1}\sum_{j=1}^N\int \Big(\overline{v(k)}e^{ikx_j}\sprod{\eta}{a(k)\eta} + {\rm h.c}\Big)\, dk
$$
and $f(k) = v(k)\hat\rho(k)$. The function $\hat\rho$ is the Fourier transform of the particle density defined by 
$$
   \rho(x):= \sum_{j=1}^N\int |\ph (\ldots x_{j-1}, x, x_{j+1}\ldots)|^2\, dx_1,\ldots \widehat{dx_j}\ldots dx_N
$$
where $x$ takes the place of $x_j$ in the argument of $\ph$ and integration w.r.t $x_j$ is omitted. The equations \eqref{DFN-phi} and \eqref{DFN-eta} are to be compared with \eqref{DF-phi} and \eqref{DF-eta}. Following \eqref{solution2LP} we now make the Ansatz
\begin{equation}\label{solution2LPN}
     \ph_t = e^{-i\lambda t}\ph_0,\qquad \eta_t = e^{-i\mu t}\eta_0
\end{equation}
with $\lambda,\mu\in \R$, $\ph_0\in \HH_{part}$ and $\eta_0\in \FF$ to be determined. The potential $V_\eta$ and the function $f$ are now independent of time. Given \eqref{solution2LPN}, the equation \eqref{DFN-eta} is solved if 
$$
      \eta_0 = W(\alpha f)^{*}\Omega
$$
and $\mu = - \|f\|^2$. For the potential $V_\eta$ this means that 
$$
          V_\eta(x_1\ldots,x_N) = -\sum_{j=1}^N \int \frac{\rho(y)}{|x_j-y|}\, dy
$$
and hence the equation for $\ph_0$ and $\lambda$ becomes the Euler-Lagrange equation for the \emph{Pekar-Tomasevich} functional
\begin{align*}
    \EE_N(\ph) & = \Big\langle \ph,\Big(\sum_{j=1}^{N}(-\Delta_{x_j}) +
        \sum_{i<j}\frac{U}{|x_i-x_j|}\Big)\ph\Big\rangle
      -\frac{1}{2}\int\frac{\rho(x)\rho(y)}{|x-y|}\,dxdy
\end{align*} 
with the constraint $\|\ph\|=1$. From \cite{Lewin2011, AG2014} we know that $\EE_N$ has a minimizer $\ph_0$, provided that $U<1+\eps_N$ where $\eps_N>0$. The minimum $E_N = \EE_N(\ph_0)$ and the Lagrange multiplier $\lambda$ are related by $E_N=\lambda-\mu$. In view of the phase 
$$
    a(t) = \exp(i \int_0^t \sprod{\ph_0\otimes \eta_0}{W\ph_0\otimes \eta_0}\,ds) = \exp(2i\mu t)
$$
and $\lambda+\mu-2\mu = E_N$ it is 
$$
       u_t = a(t)\ph_t\otimes \eta_t = e^{-iE_Nt}(\ph_0\otimes \eta_0)
$$
that we want to compare to the true time evolution generated by $H_N$. A copy of the proof of Theorem~\ref{main} shows that:

 \begin{theorem}\label{main-N}
 Let $N\geq 2$ and $u_0 = \ph_0\otimes\eta_0\in \HH$ with $\ph_0$, $\eta_0$ as defined above. Let $E_N=\EE_N(\ph_0)$. Then there exists a constant $C_N$ such that, for all $t\in\R$ and $\alpha\geq 1$,
$$
     \| e^{-iH_Nt}u_0 - e^{-i E_N t} u_0\|^2 \leq C_N \frac{|t|}{\alpha^2}.
$$
\end{theorem}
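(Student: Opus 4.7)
The plan is to run, line by line, the proof of Theorem~\ref{main}, with the one-particle operator $-\Delta+V$ replaced by the $N$-particle Schr\"odinger operator
$$
   H_{\rm part}:=\sum_{j=1}^N(-\Delta_{x_j}) + \sum_{i<j}\frac{U}{|x_i-x_j|} -\sum_{j=1}^N\int\frac{\rho_0(y)}{|x_j-y|}\,dy,
$$
where $\rho_0$ denotes the density associated to the Pekar-Tomasevich minimizer $\ph_0$. I work with UV-cutoff $\Lambda$ throughout and let $\Lambda\to\infty$ at the end as in Section~\ref{DF-accuracy}. Lemma~\ref{lm:coherent-gauge} gives, after the Weyl conjugation by $W(\alpha f)$,
$$
   H_\Lambda := W(\alpha f) H_{N,\Lambda} W(\alpha f)^{*},\qquad \tilde H := H_{\rm part} + \alpha^{-2}\Nph + \|f\|^2,
$$
so that $\delta H_\Lambda := H_\Lambda-\tilde H$ carries a factor $\alpha^{-1}$ (modulo remainders vanishing as $\Lambda\to\infty$), and $\psi_0:=\ph_0\otimes\Omega$ is an eigenstate of $\tilde H$ with eigenvalue $E_N=\lambda+\|f\|^2$. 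Since $W(\alpha f)$ is unitary and $u_0 = W(\alpha f)^{*}\psi_0$, it suffices to estimate the difference of $e^{-iH_\Lambda t}\psi_0$ and $e^{-i\tilde Ht}\psi_0$.

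Writing the squared norm of the difference via Duhamel, inserting the projection $Q_0:=P_{\ph_0}^{\perp}\otimes P_\Omega^{\perp}$ through the identity $\delta H_\Lambda\psi_0 = Q_0\,\delta H_\Lambda\psi_0+o(1)$ (verified component by component as in \eqref{Q-delta-H}), and integrating by parts using the resolvent $\tilde R:=(\tilde H-E_N)^{-1}Q_0$ produces a decomposition
$$
   \|e^{-iH_\Lambda t}\psi_0 - e^{-i\tilde Ht}\psi_0\|^2 = A_\Lambda(t) - B_\Lambda(t) + o(1),
$$
in which $A_\Lambda(t)$ and $B_\Lambda(t)$ each contain two copies of $\delta H_\Lambda$ and therefore two factors of $\alpha^{-1}$. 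Bounding these quantities by $C_N|t|/\alpha^2$ rests on (i) a uniform operator-norm bound on $\tilde R$, (ii) Lemma~\ref{lm:FSch} and Lemma~\ref{lm:Lieb} applied to the one-phonon state $\tilde R Q_0 a^{*}(G_\Lambda)\psi_0$, and (iii) the kinetic-energy bound
$$
   \Big\| \Big(\sum_{j=1}^N(1+p_j^2)\Big)^{1/2} e^{-iH_\Lambda s}\psi_0 \Big\|^2 \leq C\,\sprod{\psi_0}{(H_\Lambda+C)\psi_0},
$$
which follows from energy conservation together with the $N$-particle analogue of Lemma~\ref{H-bounds}: the Coulomb pair interaction and the cutoff potential $V_\Lambda$ are both form-bounded by $-\Delta$ with relative bound zero uniformly in $\Lambda$. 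Items (ii) and (iii) transfer verbatim because the form factor $v(k)$ is unaltered and each $G_j(k)$ is bounded on $\HH_{part}$.

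The only item that genuinely requires new input is (i). The boundedness of $\tilde R$ with a constant independent of $\alpha$ is equivalent to $\lambda$ being an \emph{isolated} eigenvalue of $H_{\rm part}$, and $\alpha$-independence is automatic because $H_{\rm part}$ carries no $\alpha$. For this I appeal to the analysis underlying the existence of a Pekar-Tomasevich minimizer in \cite{Lewin2011, AG2014}: in the range $U<1+\eps_N$ the minimizer $\ph_0$ sits strictly below the continuum threshold of $H_{\rm part}$, so that $\lambda=\inf\sigma(H_{\rm part})$ is isolated. This spectral gap is the main (and essentially the only) new ingredient needed; with it in hand the estimates on $A_\Lambda(t)$ and $B_\Lambda(t)$ are a verbatim copy of the corresponding bounds in the proof of Theorem~\ref{main}.
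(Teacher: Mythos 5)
Your proposal matches the paper's intent exactly: the paper establishes Theorem~\ref{main-N} simply by declaring it ``a copy of the proof of Theorem~\ref{main}'', and your line-by-line adaptation --- Weyl conjugation via Lemma~\ref{lm:coherent-gauge}, the Duhamel identity, insertion of $Q_0$, integration by parts with the reduced resolvent $\tilde R$, and the Frank--Schlein and kinetic-energy bounds --- is precisely that copy. You in fact go slightly further than the paper by explicitly isolating the one genuinely new ingredient, namely that the Pekar--Tomasevich minimizer's Lagrange multiplier $\lambda$ is an isolated, $\alpha$-independent spectral point of the linearized $N$-particle operator (so that $\tilde R$ is uniformly bounded), which the paper leaves implicit.
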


%--------------------------------------------- Appendix ---------------------------------------------------------------
\appendix
\section{Technical tools}

\begin{lemma}\label{lm:Vcut-V}
Let $\ph_0\in L^2(\R^d)$, $d\in\{2,3\}$, be a minimizer of the Pekar functional \eqref{Pekar} (with $\alpha/\omega_0 =1$), let $V = - |\cdot|^{-1}*|\ph_0|^2$ and let $V_\Lambda$ be defined by $\hat{V}_\Lambda = \hat{V}\chi_{\Lambda}$. Then 
$$
      \|V_\Lambda - V \|_{\infty} \to 0,\qquad (\Lambda\to\infty).
$$
\end{lemma}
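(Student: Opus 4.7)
The plan is to reduce the $L^\infty$ bound on $V-V_\Lambda$ to an $L^1$ estimate on the Fourier side. Since $\widehat{V-V_\Lambda} = \hat V\,(1-\chi_\Lambda)$, once I know that $\hat V \in L^1(\R^d)$, Fourier inversion gives
\[
   \|V - V_\Lambda\|_\infty \;\le\; (2\pi)^{-d}\int_{\R^d}|\hat V(k)|\,(1-\chi_\Lambda(k))\,dk,
\]
and the right-hand side tends to $0$ as $\Lambda\to\infty$ by dominated convergence. The whole task therefore reduces to verifying that $\hat V\in L^1(\R^d)$.

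Using $V = -|\cdot|^{-1}*\rho$ with $\rho = |\ph_0|^2$ and the Fourier conventions of the paper, I would write $\hat V(k) = -c_d|k|^{-(d-1)}\hat\rho(k)$ with $c_2 = 2\pi$ and $c_3 = 4\pi$. Near the origin, $|\hat\rho(k)| \le \|\rho\|_1 = 1$ and $|k|^{-(d-1)}$ is integrable on the unit ball for both $d=2,3$, so no regularity on $\ph_0$ is needed there. The remaining task is integrability on $\{|k| \ge 1\}$, where $|k|^{-(d-1)}$ is bounded and what is needed is that $\hat\rho \in L^1(\{|k|\ge 1\})$.

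I expect this regularity step to be the main (if standard) obstacle, and my strategy is a bootstrap on the Pekar equation $(-\Delta + V)\ph_0 = \lambda\ph_0$. Since $\rho\in L^1\cap L^\infty$ (use $\ph_0\in H^1\hookrightarrow L^p$ and improve via the equation), the splitting $|x|^{-1} = |x|^{-1}\ID_{|x|\le 1} + |x|^{-1}\ID_{|x|>1}$ together with Young's inequality gives $V\in L^\infty$; hence $\ph_0\in H^2$, and standard elliptic iteration propagates this to $\ph_0\in H^m$ for every $m\in\N$. For $m>d/2$ the space $H^m$ is a Banach algebra under pointwise multiplication, so $\rho\in H^m$ as well, i.e.\ $(1+|k|^2)^{m/2}\hat\rho \in L^2$. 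A Cauchy--Schwarz estimate against the weight $(1+|k|^2)^{-m/2}$, which is $L^2$ for $m>d/2$, then yields $\hat\rho \in L^1(\R^d)$. Combining the two regions gives $\hat V\in L^1(\R^d)$ and finishes the argument. The one place I would be careful is the first bootstrap step — verifying that the non-cutoff potential $V$ is actually bounded in both $d=2$ and $d=3$ so that $\ph_0\in H^2$ is licit — after which everything is routine dominated convergence.
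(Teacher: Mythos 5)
Your proposal is correct and follows essentially the same route as the paper: establish Sobolev regularity of $\rho=|\ph_0|^2$ via elliptic regularity for $\ph_0$ and the algebra property of $H^s$ for $s>d/2$, then control the high-frequency part of $\hat V(k)=-c_d|k|^{-(d-1)}\hat\rho(k)$ by Cauchy--Schwarz on the Fourier side. The paper bounds the tail $\{|k|>\Lambda\}$ directly (splitting $|k|^{-(d-1)}\hat\rho=|k|^{-d}\cdot|k|\hat\rho$, so that $\rho\in H^1$ already suffices) rather than proving $\hat V\in L^1$ globally and invoking dominated convergence, and it simply asserts $\ph_0\in H^2$ where you supply the standard bootstrap; these differences are cosmetic.
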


\begin{proof}
It is well known that $H^{s}(\R^d)$ is an algebra for $s>d/2$. Since $\ph_0\in H^{2}(\R^d)$ with $2>d/2$ it follows that $\rho=|\ph_0|^2\in H^2(\R^d)$, and hence 
\begin{align*}
     |(V_\Lambda - V)(x)| &= C_d\left|\int_{|k|>\Lambda} e^{ikx}\frac{1}{|k|^{d-1}}\hat{\rho}(k)\, dk \right|\\
     &\leq C_d\left(\int_{|k|>\Lambda} \frac{1}{|k|^{2d}} dk\right)^{1/2} \left(\int |k|^2 |\hat\rho(k)|^2\, dk \right)^{1/2} \to 0\qquad (\Lambda\to\infty).
\end{align*}
\end{proof}

\begin{lemma}[Frank, Schlein]\label{lm:FSch}
Let $G_\Lambda : L^2(\R^d) \to  L^2(\R^d)\otimes  L^2(\R^d)$ be defined by $(G_\Lambda\ph)(x,k)=\ph(x)e^{-ikx}v(k)\chi_\Lambda(k)$, with $v\in L^2_{\rm loc}(\R^d)$ such that 
$$
    C_v :=\sup_{q\in\R^d} \int\frac{|v(k)|^2}{1+(q-k)^2}\,dk <\infty.
$$
Then, for all $\Lambda>0$,
\begin{eqnarray*}
   \|a(G_\Lambda)\psi\| &\leq & C_v \|(1+p^2)^{1/2}N^{1/2}\psi\|\\
   \|(N+1)^{-1/2}a(G_\Lambda)\psi\| &\leq &C_v \|(1+p^2)^{1/2}\psi\|
\end{eqnarray*}
\end{lemma}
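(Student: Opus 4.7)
The plan is to reduce both inequalities to a pointwise Cauchy--Schwarz estimate after Fourier-transforming in the particle coordinate, exploiting the hypothesis on $C_v$ to absorb the singular factor $v(k)$.

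First, I would observe that $a(G_\Lambda)$ maps the $n$-phonon sector of $L^2(\R^d)\otimes\FF$ into the $(n-1)$-phonon sector, so different $n$-sectors of the output are orthogonal. It therefore suffices to establish both bounds on a fixed $n$-phonon component $\psi^{(n)}$ and sum. On this sector the definition of $a(G_\Lambda)$ together with the explicit form $G_\Lambda(k)^{*}=$ multiplication by $e^{ikx}v(k)\chi_\Lambda(k)$ (note $v$ is real) yields
\begin{equation*}
(a(G_\Lambda)\psi^{(n)})(x;k_1,\ldots,k_{n-1})=\sqrt{n}\int e^{ikx}\,v(k)\chi_\Lambda(k)\,\psi^{(n)}(x;k,k_1,\ldots,k_{n-1})\,dk.
\end{equation*}

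Next I would pass to momentum space in the particle coordinate. Writing $\tilde\psi^{(n)}(p;k_1,\ldots,k_n)$ for the Fourier transform in $x$, the factor $e^{ikx}$ translates the momentum argument, producing
\begin{equation*}
\widetilde{a(G_\Lambda)\psi^{(n)}}(p;k_1,\ldots,k_{n-1})=\sqrt{n}\int v(k)\chi_\Lambda(k)\,\tilde\psi^{(n)}(p-k;k,k_1,\ldots,k_{n-1})\,dk.
\end{equation*}
The key step is to insert the weight $\sqrt{1+(p-k)^2}$ and apply Cauchy--Schwarz in the $k$-integral: the $k$-factor becomes
\begin{equation*}
\left(\int\frac{|v(k)|^{2}\chi_\Lambda(k)}{1+(p-k)^{2}}\,dk\right)\left(\int(1+(p-k)^{2})\,|\tilde\psi^{(n)}(p-k;k,\ldots)|^{2}\,dk\right),
\end{equation*}
and the first bracket is $\leq C_v$ uniformly in $p$ by hypothesis.

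Finally, I would substitute $p'=p-k$ in the outer $dp\,dk$ integration, so that $(1+(p-k)^{2})$ becomes $(1+p'^{2})$ acting on the particle-momentum argument of $\tilde\psi^{(n)}$. Parseval converts this back to position space, and symmetry of $\psi^{(n)}$ in its phonon arguments lets me relabel, yielding
\begin{equation*}
\|a(G_\Lambda)\psi^{(n)}\|^{2}\leq C_v\,n\,\|(1+p^{2})^{1/2}\psi^{(n)}\|^{2}=C_v\,\|(1+p^{2})^{1/2}N^{1/2}\psi^{(n)}\|^{2},
\end{equation*}
using $N\psi^{(n)}=n\psi^{(n)}$. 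Summing over $n$ by orthogonality gives the first inequality (with the convention $C_v$ absorbing a square root into the constant, as in the statement). For the second inequality, since $a(G_\Lambda)\psi^{(n)}$ sits in the $(n-1)$-phonon sector, multiplication by $(N+1)^{-1/2}$ is multiplication by $n^{-1/2}$, which exactly cancels the factor of $n$ above and leaves $C_v\|(1+p^{2})^{1/2}\psi^{(n)}\|^{2}$; summing over $n$ again gives the claim.

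The only real substantive step is the Cauchy--Schwarz application with the right weight $1+(p-k)^{2}$; everything else is bookkeeping, using symmetry, Parseval, and an elementary change of variables. The hypothesis on $C_v$ is precisely what is needed to make this Cauchy--Schwarz pay off, since for the relevant Fr\"ohlich form factor $v(k)\sim|k|^{-(d-1)/2}$ the integrand $|v(k)|^{2}/(1+(q-k)^{2})$ is integrable at both the origin and infinity in $d\in\{2,3\}$, uniformly in $q$.
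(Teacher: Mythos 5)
Your proof is correct. For the second inequality you do exactly what the paper does: the paper notes $(N+1)^{-1/2}a(G_\Lambda)\psi = a(G_\Lambda)N^{-1/2}\psi$ and invokes the first bound, which is the same sector-counting observation you make (the output of $a(G_\Lambda)$ on the $n$-phonon sector lies in the $(n-1)$-sector, so $(N+1)^{-1/2}$ acts as $n^{-1/2}$ there). For the first inequality the paper gives no argument at all and simply cites Frank--Schlein and Griesemer--W\"unsch; your sector decomposition, Fourier transform in the particle variable, and Cauchy--Schwarz against the weight $1+(p-k)^2$ followed by the substitution $p'=p-k$ is precisely the argument in those references, so you have supplied the missing proof rather than found a different one. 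Two small remarks: your derivation yields $\|a(G_\Lambda)\psi\|\leq \sqrt{C_v}\,\|(1+p^2)^{1/2}N^{1/2}\psi\|$, i.e.\ the constant should really be $C_v^{1/2}$ rather than $C_v$ --- this imprecision is already present in the statement of the lemma and you rightly flag it; and the appeal to symmetry of $\psi^{(n)}$ at the relabelling step is not actually needed, since after the change of variables the integral over $(k,k_1,\dots,k_{n-1})$ is already the full $L^2$ norm over the $n$ phonon variables (symmetry enters only in the standard formula for $a$ on the $n$-particle sector, which you use correctly).
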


\begin{proof}
The second inequality follows from $(N+1)^{-1/2}a(G_\Lambda)\psi = a(G_\Lambda)N^{-1/2}\psi$ and from the first one. For the proof of the first inequality we refer to \cite{FrankSchlein2014, GriWue2016}. 
\end{proof}

\begin{lemma}\label{lm:Lieb}
Let $\ph_0\in L^2(\R^d)$ be a minimizer of the Pekar functional, let $Q_0 = P_{\ph_0}^{\perp}\otimes P_\Omega^{\perp}$ and let $\tilde R$ be the resolvent of $\tilde H-E_p$ on $\ran Q_0$. Then
\begin{align*}
     (a)\quad &\sup_{\alpha>0} \|(1+p^2)^{1/2} \tilde R^{1/2} Q_0\| <\infty\\
     (b)\quad &\sup_{\alpha,\Lambda>0} \|\tilde R^{1/2} Q_0 a^{*}(G_\Lambda)(\ph_0\otimes \Omega)\| <\infty,
\end{align*}
where $G_\Lambda$ and $\tilde{H}$ are defined in Sections~\ref{DF-polaron} and \ref{DF-accuracy}.
\end{lemma}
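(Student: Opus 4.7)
The starting point for both estimates is the decomposition $\tilde H - E_p = h_0 + \alpha^{-2}N$ with $h_0 := -\Delta + V - \lambda \geq 0$. Because $\ph_0$ is the ground state of $-\Delta+V$ and $\Omega$ is the vacuum, $Q_0$ commutes with both $h_0\otimes 1$ and $1\otimes N$, hence with $\tilde H$. Since $\ph_0$ is a Pekar minimizer and $V=-|\cdot|^{-1}*|\ph_0|^2$ is the associated potential, neither depends on $\alpha$ in the rescaled units of Section~\ref{DF-accuracy}; in particular the spectral gap $g>0$ of $-\Delta+V$ at $\lambda$ and $\|V\|_\infty<\infty$ (boundedness of $V$ as in the proof of Lemma~\ref{lm:Vcut-V}) are $\alpha$-independent, and $\tilde H-E_p\geq g$ on $\ran Q_0$, so $\|\tilde R\|\leq 1/g$. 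Throughout, I will use the functional-calculus identity $\sprod{\xi}{(\tilde H-E_p)\xi}=\|Q_0\psi\|^2$ for $\xi=\tilde R^{1/2}Q_0\psi$.

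For part (a), the plan is to combine the form bounds $h_0\leq \tilde H-E_p$ (from $\alpha^{-2}N\geq 0$) and $1+p^2\leq (1+\|V-\lambda\|_\infty)+h_0$ (from $p^2=h_0-V+\lambda$). With $\xi=\tilde R^{1/2}Q_0\psi$ these yield
\begin{equation*}
   \|(1+p^2)^{1/2}\xi\|^2 \leq (1+\|V-\lambda\|_\infty)\|\xi\|^2 + \sprod{\xi}{(\tilde H-E_p)\xi} \leq \bigl[(1+\|V-\lambda\|_\infty)/g + 1\bigr]\|\psi\|^2,
\end{equation*}
where the two terms on the right are controlled respectively by $\|\xi\|^2\leq g^{-1}\|\psi\|^2$ and by the identity above.

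For part (b), the crucial structural observation is that $[\tilde H,N]=0$, so $\tilde R$ preserves each fixed-$N$ sector. Since $\xi:=Q_0 a^{*}(G_\Lambda)(\ph_0\otimes\Omega)$ has exactly one phonon, so does $\zeta:=\tilde R\,\xi$, and hence $N^{1/2}\zeta=\zeta$. I would then dualize,
\begin{equation*}
   \|\tilde R^{1/2}\xi\|^2 = \sprod{\xi}{\zeta} = \sprod{\ph_0\otimes\Omega}{a(G_\Lambda)\zeta},
\end{equation*}
and estimate the inner product by Cauchy--Schwarz and Lemma~\ref{lm:FSch}, using $N^{1/2}\zeta=\zeta$ together with part (a) applied to $\zeta=\tilde R^{1/2}(\tilde R^{1/2}\xi)$:
\begin{equation*}
   |\sprod{\ph_0\otimes\Omega}{a(G_\Lambda)\zeta}| \leq \|a(G_\Lambda)\zeta\| \leq C_v\|(1+p^2)^{1/2}\zeta\| \leq C_v\,C\,\|\tilde R^{1/2}\xi\|,
\end{equation*}
where $C$ is the bound from (a). Dividing out a single factor of $\|\tilde R^{1/2}\xi\|$ gives an $\alpha$- and $\Lambda$-independent bound.

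I do not expect a serious technical obstacle; the real conceptual point is the one-phonon confinement of $\zeta$, which is exactly what is needed to absorb the $N^{1/2}$ loss in Lemma~\ref{lm:FSch} and let (b) reduce to (a). Without this reduction one would be forced into a direct UV analysis involving integrals like $\int |v(k)|^2\chi_\Lambda(k)^2\,dk$, which diverge as $\Lambda\to\infty$, and one would have to extract the missing $(1+k^2)^{-1}$ decay by exploiting the shift $h_0\,e^{-ikx}\ph_0 = e^{-ikx}(k^2-2k\cdot p)\ph_0$ and a variant of the $C_v$-type bound in Lemma~\ref{lm:FSch}.
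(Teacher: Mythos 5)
Your proof is correct, and it splits naturally into two comparisons. Part (a) is essentially the paper's own argument: the paper uses $1+p^2\leq \tilde H+\|V\|_\infty+1$ and the identity $\tilde R^{1/2}(\tilde H-E_p)\tilde R^{1/2}=Q_0$ plus a multiple of $\tilde R$, which is the same computation as your split $1+p^2\leq (1+\|V-\lambda\|_\infty)+h_0$ with $h_0\leq \tilde H-E_p$; both reduce to the $\alpha$-independent gap bound $\|\tilde R Q_0\|\leq \|(-\Delta+V-\lambda)^{-1}P_{\ph_0}^\perp\|$. Part (b), however, takes a genuinely different route. The paper does the direct ultraviolet analysis you mention only as a fallback: it writes $G_\Lambda(k)=G_\Lambda(k)\chi_{|k|\leq 1}-|k|^{-2}\bigl(p\cdot k\,G_\Lambda(k)-G_\Lambda(k)\,k\cdot p\bigr)\chi_{|k|>1}$, using $[p\cdot k,e^{-ikx}]=-|k|^2e^{-ikx}$, so that the large-$k$ part carries the integrable weight $|v(k)|^2/|k|^2$ and the stray momenta are absorbed by $\sup_{\alpha}\|\tilde R^{1/2}Q_0\,p\|<\infty$. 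Your argument instead dualizes, $\|\tilde R^{1/2}\xi\|^2=\sprod{\ph_0\otimes\Omega}{a(G_\Lambda)\tilde R\xi}$, and exploits that $\tilde R\xi$ lives in the one-phonon sector (since $[\tilde H,N]=0$), so the $N^{1/2}$ in Lemma~\ref{lm:FSch} costs nothing and part (a) closes the estimate after cancelling one factor of $\|\tilde R^{1/2}\xi\|$ — legitimate since that norm is finite for each finite $\Lambda$ and the bound is trivial if it vanishes. This reduction of (b) to (a) plus Lemma~\ref{lm:FSch} is shorter and avoids the $k$-space integration by parts entirely; it is also consistent in spirit with the paper, which invokes the same one-phonon observation later in the proof of Theorem~\ref{main}. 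What the paper's commutator decomposition buys in exchange is that it does not rely on the phonon-number confinement of $\tilde R\xi$ and so would survive for initial states with phonons already present, whereas your argument is tied to $a^*(G_\Lambda)$ acting on the vacuum sector. Both proofs yield bounds uniform in $\alpha$ and $\Lambda$, as required.
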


\begin{proof}
For any normalized vector $\psi\in D(-\Delta+N)$ we obtain, using $p^2\leq \tilde H+\|V\|_{\infty}$, that
\begin{align*}
     \|(1+p^2)^{1/2} \tilde R^{1/2} Q_0\psi\|^2 & \leq \sprod{Q_0\psi}{\tilde R^{1/2}(\tilde H+\|V\|_{\infty} + 1)\tilde R^{1/2}Q_0\psi}\\
     & = \sprod{Q_0\psi}{(1 + (E_p+\|V\|_{\infty} + 1)\tilde R)Q_0\psi}\\
     & \leq 1 + (E_p+\|V\|_{\infty} + 1)\|\tilde R Q_0\|.
\end{align*}
Since $E_p=\lambda+\| f \|^2$ we have $\tilde H- E_p \geq -\Delta +V -\lambda$ and 
$\|\tilde R Q_0\| \leq \|(-\Delta +V -\lambda)^{-1}P_{\ph_0}^{\perp}\|$, which is finite and independent of $\alpha$.
This completes the proof of (a). 

To prove (b) we write
\begin{align*}
     G_\Lambda(k) &= G_\Lambda(k)\chi_{|k|\leq 1} - \frac{1}{|k|^2}[p\cdot k,
     G_\Lambda (k)]\chi_{|k|> 1}  \\
       &=  G_\Lambda(k)\chi_{|k|\leq 1} - \frac{1}{|k|^2} (p\cdot k G_\Lambda(k) - G_\Lambda(k)
       k\cdot p )\chi_{|k|>1}
\end{align*}
All three terms give rise to uniformly bounded contributions to $\|\tilde R^{1/2} Q_0 a^{*}(G_\Lambda)(\ph_0\otimes \Omega)\|$ because 
$$
    \int_{|k|>1} \frac{|v(k)|^2}{|k|^2}\, dk  <\infty
$$
and because, similar to (a), $\sup_{\alpha>0} \|\tilde R^{1/2} Q_0 p\| <\infty$.
\end{proof}

\begin{lemma}\label{lm:coherent-gauge}
Let $f(k)=v(k)\hat\rho(k)$, $V=-|\cdot|^{-1}*\rho$ and let $V_\Lambda$ be defined by $\hat V_\Lambda(k)=\chi_\Lambda(k)\hat V(k)$. Then 
\begin{align*}
       W(\alpha f) \Hrescut W(\alpha f)^{*} &= H_\Lambda +V_\Lambda -\alpha^{-1}\phi(f) +\|f\|^2,\\
        W(\alpha f) \Heff W(\alpha f)^{*} &= -\Delta + V +\alpha^{-2}N +\|f\|^2,
\end{align*}
and 
$$
    W(\alpha f) (\Hrescut - \Heff) W(\alpha f)^{*} = \alpha^{-1}\big(\phi(G_\Lambda)-\phi(f)\big) + (V_\Lambda - V). 
$$
\end{lemma}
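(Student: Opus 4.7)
The plan is to verify the three displays by direct computation using the shift formulas for the Weyl operator $W(g)=\exp(a^{*}(g)-a(g))$. Since the third display is just the difference of the first two, the real work lies in establishing those.

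First I would assemble the basic identities. From $W(g)^{*}a(k)W(g)=a(k)+g(k)$ and $W(g)^{*}a^{*}(k)W(g)=a^{*}(k)+\overline{g(k)}$ one obtains, for any $h\in L^{2}$,
\begin{align*}
   W(g)\phi(h)W(g)^{*} &= \phi(h)-2\Rea\sprod{h}{g},\\
   W(g)NW(g)^{*} &= N-\phi(g)+\|g\|^{2}.
\end{align*}
Since $W(\alpha f)$ acts only on $\FF$, it commutes with $-\Delta$ and with multiplication operators in $x$. Applied factor-by-factor to the operator-valued field, the same rule gives
$$
   W(\alpha f)\phi(G_{\Lambda})W(\alpha f)^{*}
   = \phi(G_{\Lambda}) - \alpha\int\bigl(f(k)G_{\Lambda}(k)^{*}+\overline{f(k)}G_{\Lambda}(k)\bigr)\,dk.
$$

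Second I would identify the integral on the right with $-\alpha V_{\Lambda}$. With $f(k)=v(k)\hat\rho(k)$, $\rho=|\ph_0|^2$, and $G_{\Lambda}(k)$ equal to multiplication by $\chi_{\Lambda}(k)v(k)e^{-ikx}$, the integrand simplifies (using reality of $v$ and $\chi_\Lambda$) to $\chi_{\Lambda}(k)|v(k)|^{2}\bigl(\hat\rho(k)e^{ikx}+\overline{\hat\rho(k)}e^{-ikx}\bigr)$. The defining relation $\tfrac{1}{2|x|}=\int e^{ikx}|v(k)|^{2}\,dk$ together with $V=-|\cdot|^{-1}*\rho$ and $\hat V_{\Lambda}=\chi_{\Lambda}\hat V$ turn this integral into $-V_{\Lambda}(x)$. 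Hence
$$
    \alpha^{-1}W(\alpha f)\phi(G_{\Lambda})W(\alpha f)^{*} = \alpha^{-1}\phi(G_{\Lambda})+V_{\Lambda}.
$$

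Finally I would assemble the pieces. Applied to $\Hrescut=-\Delta+\alpha^{-2}N+\alpha^{-1}\phi(G_{\Lambda})$ the formulas above give the first identity directly. For $\Heff$ the $N$-term contributes $-\alpha^{-1}\phi(f)+\|f\|^{2}$ while the $\alpha^{-1}\phi(f)$-term contributes $\alpha^{-1}\phi(f)-2\|f\|^{2}$; the two $\phi(f)$'s cancel and the $-2\|f\|^{2}$ absorbs the additive constant $+2\|f\|^{2}$ built into $\Heff$, leaving $-\Delta+V+\alpha^{-2}N+\|f\|^{2}$. Subtracting the two identities yields the third. The only delicate step is the Fourier-inversion bookkeeping that identifies the $\phi(G_{\Lambda})$-shift with $-\alpha V_{\Lambda}$: one must simultaneously track the paper's normalization of $\hat\rho$, the reality of $v$ and $\chi_{\Lambda}$, and the sign convention in $V=-|\cdot|^{-1}*\rho$, so that the relevant kernel comes out to exactly $-V_{\Lambda}$ rather than $\pm cV_{\Lambda}$ with a wrong constant. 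Everything else is routine algebra.
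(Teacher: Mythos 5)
Your proposal is correct and follows essentially the same route as the paper: conjugate $a(k)$, $a^{*}(k)$ by $W(\alpha f)$ to get the shift formulas for $N$, $\phi(f)$ and $\phi(G_\Lambda)$, and identify the $c$-number shift of $\phi(G_\Lambda)$ with $V_\Lambda$ via the relation between $|v|^2$ and the Fourier transform of $|\cdot|^{-1}/2$. The Fourier bookkeeping you flag as the delicate step indeed works out to exactly $-V_\Lambda$ with the paper's conventions, so nothing is missing.
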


\begin{proof}
From
\begin{align*}
    W(\alpha f) a(k) W(\alpha f)^{*} &= a(k)-\alpha f(k)\\
     W(\alpha f) a^*(k) W(\alpha f)^{*} &= a(k)-\alpha \overline{f(k)}
\end{align*}
it follows that 
\begin{align}
    W(\alpha f) \alpha^{-2}N W(\alpha f)^{*} &=  \alpha^{-2}N -  \alpha^{-1}\phi(f) + \|f\|^2\label{cg1}\\
     W(\alpha f) \alpha^{-1}\phi(f) W(\alpha f)^{*} &=  \alpha^{-1}\phi(f) -2\|f\|^2 \label{cg2}
\end{align}
and
\begin{equation}\label{cg3}
   W(\alpha f) \alpha^{-1}\phi(G_{\Lambda,x}) W(\alpha f)^{*} =  \alpha^{-1}\phi(G_{\Lambda,x}) +V_\Lambda(x),
\end{equation}
where we used that $(2\pi)^d |v(k)|^2$ is the Fourier transform of $|\cdot |^{-1}/2$. The lemma follows from \eqref{cg1}, \eqref{cg2} and \eqref{cg3}. 
\end{proof}

\begin{lemma}\label{H-bounds}
For every $\eps\in (0,1)$ there exists a constant $C_\eps$ such that for all $\Lambda>0$, $\alpha\geq 1$,
$$
   (1-\eps)\left(-\Delta + \alpha^{-2}N\right) - C_\eps\ \leq\  H_{\alpha,\Lambda} \ \leq\  (1+ \eps)\left(-\Delta + \alpha^{-2}N\right) + C_\eps.
$$
\end{lemma}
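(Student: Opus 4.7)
The lemma is equivalent to the quadratic-form bound $\pm\alpha^{-1}\phi(G_\Lambda) \leq \eps(-\Delta + \alpha^{-2}N) + C_\eps$, uniform in $\Lambda > 0$ and $\alpha \geq 1$. A direct application of Lemma~\ref{lm:FSch} together with Cauchy--Schwarz and Young's inequality yields only a relative form bound with constant of order $C_v$, which is not small enough for arbitrary $\eps$. The plan is therefore to split the form factor in momentum: fix an auxiliary cutoff $\Lambda_0\in(0,\Lambda]$, write $\phi(G_\Lambda) = \phi(G_{\Lambda_0}) + \phi(G_\Lambda - G_{\Lambda_0})$, and handle the two pieces by different estimates.

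For the low-frequency piece, the operator norm of $G_{\Lambda_0}(k)$ on $L^2(\R^d)$ is the $x$-independent function $|v(k)|\chi_{\Lambda_0}(k)$, which lies in $L^2(\R^d)$ with norm $c_0(\Lambda_0)<\infty$. The standard second-quantization bounds (Cauchy--Schwarz in $k$ against $\int\norm{a(k)\psi}^2\, dk = \sprod{\psi}{N\psi}$, together with the canonical commutation relation for $a^{*}$) give $\norm{\phi(G_{\Lambda_0})\psi}\leq 2c_0(\Lambda_0)\norm{(N+1)^{1/2}\psi}$, and Cauchy--Schwarz followed by Young's inequality (absorbing the factor $\alpha^{-1}$ into the $N$-factor) converts this into a contribution bounded by $\tfrac{\eps}{2}\alpha^{-2}\sprod{\psi}{N\psi} + C_{\eps,\Lambda_0}\norm{\psi}^2$, where I use $\alpha\geq 1$. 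For the high-frequency piece, I apply Lemma~\ref{lm:FSch} with $v$ replaced by $v\chi_{\{\Lambda_0<|k|<\Lambda\}}$; the corresponding constant is $C(\Lambda_0) := \sup_q\int_{|k|>\Lambda_0}|v(k)|^2/(1+(q-k)^2)\, dk$. Writing $\sprod{\psi}{\phi(G_\Lambda - G_{\Lambda_0})\psi} = 2\Rea\sprod{(N+1)^{1/2}\psi}{(N+1)^{-1/2}a(G_\Lambda - G_{\Lambda_0})\psi}$ and invoking the second inequality of the lemma gives $\alpha^{-1}|\sprod{\psi}{\phi(G_\Lambda - G_{\Lambda_0})\psi}| \leq 2C(\Lambda_0)\alpha^{-1}\norm{(N+1)^{1/2}\psi}\norm{(1+p^2)^{1/2}\psi}$, and then $2AB\leq A^2 + B^2$ with $A = \alpha^{-1}\norm{(N+1)^{1/2}\psi}$, $B = \norm{(1+p^2)^{1/2}\psi}$ bounds this by $C(\Lambda_0)(\sprod{\psi}{p^2\psi} + \alpha^{-2}\sprod{\psi}{N\psi}) + 2C(\Lambda_0)\norm{\psi}^2$. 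Choosing $\Lambda_0$ so that $C(\Lambda_0)\leq \eps/2$ and summing the two contributions gives the claimed bound with $C_\eps := C_{\eps,\Lambda_0} + \eps$.

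The main obstacle is the statement $C(\Lambda_0)\to 0$ as $\Lambda_0\to\infty$, which does not follow from dominated convergence because the integrand admits no $q$-uniform integrable majorant. It must be proved by hand: splitting $\{|k|>\Lambda_0\}$ according to whether $|k|$ is small, comparable to, or much larger than $\max(\Lambda_0,|q|)$ and using $|q-k|\geq \tfrac{1}{2}\max(|k|,|q|)$ on the outer regions, a short calculation for the Fr\"ohlich form factor $v(k)\sim|k|^{-(d-1)/2}$ in $d\in\{2,3\}$ yields the quantitative estimate $C(\Lambda_0)=O(\Lambda_0^{-1})$ with a constant independent of $q$. Once this decay is in hand, the remainder of the argument is routine.
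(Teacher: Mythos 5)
Your proof is correct, and it shares the paper's basic architecture -- split the interaction at an auxiliary momentum scale $\Lambda_0$, reduce the lemma to the form bound $\pm\alpha^{-1}\phi(G_\Lambda)\leq\eps(-\Delta+\alpha^{-2}N)+C_\eps$, and control the low-frequency piece by the elementary $L^2$-bound $\|v\chi_{\Lambda_0}\|_2$ against $N$ exactly as in the paper's estimate \eqref{Hb2} -- but it diverges on the high-frequency tail. The paper disposes of that piece in one line by quoting Lemma 2.1 of \cite{GriWue2016}, which bounds $\pm\alpha^{-1}(\phi(G_\Lambda)-\phi(G_{\Lambda_0}))$ relative to $-\Delta+\alpha^{-2}N+\alpha^{-2}$ with the explicit constant $2\bigl(\int_{|k|\geq\Lambda_0}|v(k)|^2/k^2\,dk\bigr)^{1/2}$, whose smallness is immediate because $|v(k)|^2/|k|^2\sim|k|^{-(d+1)}$ is integrable at infinity. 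You instead reuse the in-paper Lemma~\ref{lm:FSch} with the truncated form factor $v\chi_{\{\Lambda_0<|k|<\Lambda\}}$, which forces you to prove by hand that $C(\Lambda_0)=\sup_q\int_{|k|>\Lambda_0}|v(k)|^2/(1+(q-k)^2)\,dk\to0$; you correctly identify this as the real work (dominated convergence fails for lack of a $q$-uniform majorant) and your region-splitting argument does establish it. The trade-off: the paper's route is shorter but leans on an external reference, while yours is self-contained within the tools already stated in the appendix at the cost of the extra decay estimate. One small quantitative caveat: your claimed rate $C(\Lambda_0)=O(\Lambda_0^{-1})$ is right in $d=3$, but in $d=2$ the region $|k-q|\leq|q|/2$ contributes $\sim|q|^{-1}\log(1+|q|^2)$, so you only get $O(\Lambda_0^{-1}\log\Lambda_0)$; this is harmless since all that is needed is $C(\Lambda_0)\to0$.
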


\begin{proof}
From Lemma 2.1 of \cite{GriWue2016} we know that, for all  $\Lambda\geq \Lambda_0>0$,
\begin{eqnarray}\nonumber
   \pm\alpha^{-1}\left(\phi(G_\Lambda) - \phi(G_{\Lambda_0})\right) &\leq & 2\left(\int_{|k_0|\geq \Lambda_0} \frac{|v(k)|^2}{k^2}\right)^{1/2}(-\Delta + \alpha^{-2}N +\alpha^{-2})\\ 
    &\leq &\eps (-\Delta + \alpha^{-2}N +\alpha^{-2})\label{Hb1}
\end{eqnarray} 
for $\Lambda_0$ large enough. On the other hand, for all $\Lambda\leq \Lambda_0$,
\begin{equation}\label{Hb2}
 \pm\alpha^{-1}\phi(G_{\Lambda}) \leq \frac{\eps}{\alpha^2}N + \frac{1}{\eps} \int_{|k|\leq \Lambda_0} |v(k)|^2\, dk.
\end{equation}
From \eqref{Hb1} and \eqref{Hb2} combined it follows that, for all $\Lambda>0$
$$
     \pm\alpha^{-1}\phi(G_{\Lambda}) \leq \eps (-\Delta + 2\alpha^{-2}N +\alpha^{-2}) + \frac{1}{\eps} \int_{|k|\leq \Lambda_0} |v(k)|^2\, dk,
$$
which proves the desired estimates.
\end{proof}

\section{Rescaling of the Fr\"ohlich Hamiltonian}

let $H_{F}$ and $\Hres$ be the self-adjoint Hamiltonians defined in terms
of the norm resolvent limits of the cutoff Hamiltonians
\begin{align*}
     H_{F,\Lambda} &= -\Delta +N + \sqrt{\alpha}\phi(G_{\Lambda})\\
     H_{\alpha,\Lambda} &= -\Delta+\alpha^{-2}N + \alpha^{-1}\phi(G_{\Lambda}).
\end{align*}
where $G_{\Lambda,x}(k) = c_d |k|^{-(d-1)/2}\chi_\Lambda(k)e^{-ikx}$.

\begin{prop}\label{H-rescaling}
There exists a unitary transformation $U$ depending on $\alpha>0$ such
that $U^{*} H_{F}U = \alpha^2 \Hres$.
\end{prop}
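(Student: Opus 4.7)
The plan is to build $U$ as a tensor product of two dilations: a position-space rescaling on the particle factor and the second quantization of a momentum-space dilation on the phonon factor. Concretely, define the unitary $V_\alpha$ on $L^2(\R^d)$ by $(V_\alpha\psi)(x)=\alpha^{d/2}\psi(\alpha x)$, and the unitary $D_\alpha$ on $L^2(\R^d,dk)$ by $(D_\alpha f)(k)=\alpha^{-d/2}f(k/\alpha)$. Then set
$$
     U := V_\alpha\otimes \Gamma(D_\alpha),
$$
where $\Gamma(D_\alpha)$ is the standard lift of $D_\alpha$ to $\FF$. All three factors are unitary, hence so is $U$. The goal is then to show the identity $U^{*}H_{F,\Lambda}U = \alpha^2 H_{\alpha,\Lambda/\alpha}$ at the level of cutoff Hamiltonians and pass to the limit.

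I would verify the identity term by term. The particle rescaling gives $V_\alpha^{*}(-\Delta)V_\alpha = -\alpha^{2}\Delta$, and the number operator is invariant under second quantization of any unitary on the one-particle space, so $\Gamma(D_\alpha)^{*}N\Gamma(D_\alpha)=N$. The substantive piece is the interaction. Regard $a^{*}(G_\Lambda)$ as $a^{*}$ evaluated on the operator-valued form factor $G_\Lambda:L^{2}_x\to L^{2}_x\otimes L^{2}_k$, $(G_\Lambda\ph)(x,k)=\chi_\Lambda(k)v(k)e^{-ikx}\ph(x)$. Conjugation by $V_\alpha$ replaces the multiplication symbol $e^{-ikx}$ by $e^{-ikx/\alpha}$, and conjugation by $\Gamma(D_\alpha)$ implements formally $a^{*}(k)\mapsto \alpha^{-d/2}a^{*}(k/\alpha)$. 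Substituting $k=\alpha k'$ in the resulting integral produces a Jacobian $\alpha^{d}$, and the scaling $v(\alpha k')=\alpha^{-(d-1)/2}v(k')$ (valid in both $d\in\{2,3\}$ for the stated form factor) combines with the other powers into $\alpha^{d/2-(d-1)/2}=\alpha^{1/2}$. One obtains
$$
     U^{*}\phi(G_\Lambda)U = \sqrt{\alpha}\,\phi(G_{\Lambda/\alpha}),
$$
and therefore
$$
     U^{*}H_{F,\Lambda}U = -\alpha^{2}\Delta + N + \alpha\,\phi(G_{\Lambda/\alpha}) = \alpha^{2}H_{\alpha,\Lambda/\alpha}.
$$

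Finally, since $H_{F,\Lambda}\to H_F$ and $H_{\alpha,\Lambda'}\to \Hres$ in norm-resolvent sense as the respective cutoffs tend to infinity, and since norm-resolvent convergence is preserved both under conjugation by the fixed unitary $U$ and under multiplication by the fixed scalar $\alpha^{2}$, letting $\Lambda\to\infty$ (so that $\Lambda/\alpha\to\infty$ as well) in the cutoff identity yields $U^{*}H_F U=\alpha^{2}\Hres$, as claimed.

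The point requiring most care is the phonon-side computation of $U^{*}\phi(G_\Lambda)U$, which must be done rigorously rather than through the distributional manipulations of $a^{*}(k)$ used heuristically above. The clean formulation is the transformation law $(V_\alpha^{*}\otimes \Gamma(D_\alpha)^{*})a^{\#}(G)(V_\alpha\otimes \Gamma(D_\alpha))=a^{\#}((V_\alpha^{*}\otimes D_\alpha^{*})GV_\alpha)$ for operator-valued form factors $G$, after which the statement reduces to an explicit change of variables in $L^2(\R^d,dk)$ that relies exclusively on the homogeneity of $v$. This is the single place in the argument where the choice of dimension $d\in\{2,3\}$ and the specific form of $v$ enter.
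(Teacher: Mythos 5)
Your proposal is correct and follows essentially the same route as the paper: the unitary $V_\alpha\otimes\Gamma(D_\alpha)$ coincides with the paper's $U_\alpha\otimes\Gamma(U_\alpha^{*})$ (since $D_\alpha=U_\alpha^{*}$), the cutoff identity $U^{*}H_{F,\Lambda}U=\alpha^{2}H_{\alpha,\Lambda/\alpha}$ is established by the same term-by-term computation using the homogeneity $v(\alpha k)=\alpha^{-(d-1)/2}v(k)$, and the conclusion is obtained by the same passage to the norm-resolvent limit $\Lambda\to\infty$. Your closing remark about reducing the phonon-side conjugation to the transformation law $\Gamma(W)^{*}a^{\#}(G)\Gamma(W)=a^{\#}(W^{*}G)$ for operator-valued form factors is exactly the rigorous content behind the paper's one-line identity $\Gamma(U_\alpha)\phi(G_{\Lambda,x/\alpha})\Gamma(U_\alpha^{*})=\phi(U_\alpha G_{\Lambda,x/\alpha})$.
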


\begin{proof}
Let $U_\alpha$ in $L^2(\R^d)$ be defined by $(U_\alpha\psi)(x) =
\alpha^{d/2}\psi(\alpha x)$ and let $U=U_\alpha\otimes
\Gamma(U_\alpha^{*})$. We claim that
\begin{equation}\label{scaling}
    U^{*} H_{F,\Lambda}U = \alpha^2 (-\Delta) + N +\alpha \phi(G_{\Lambda/\alpha}) =
    \alpha^2 H_{\alpha,\Lambda/\alpha}.
\end{equation}
From \eqref{scaling} the assertion follows in the limit
$\Lambda\to\infty$. To prove \eqref{scaling} we use that 
$$
   \big((U_\alpha^{*}\otimes 1)\phi(G_\Lambda)\psi\big)(x) =
   \alpha^{-d/2}\phi(G_{\Lambda,x/\alpha})\psi(x/\alpha) =  \phi(G_{\Lambda,x/\alpha}) (U_\alpha^{*}\otimes 1)\psi(x) 
$$
and 
$$
     \Gamma(U_\alpha) \phi(G_{\Lambda,x/\alpha})\Gamma(U_\alpha^{*}) =
    \phi(U_\alpha G_{\Lambda,x/\alpha}) =  \sqrt{\alpha}\phi(G_{\Lambda/\alpha,x}).
$$
In the last equation we used the scaling properties of $G_{\Lambda,x}$, 
which imply that 
$$
       U_\alpha G_{\Lambda,x/\alpha}(k) =
       \alpha^{d/2}G_{\Lambda,x/\alpha}(\alpha k)= \sqrt{\alpha}G_{\Lambda/\alpha,x}(k).
$$
\end{proof}

\noindent
\textbf{Acknowledgement:} The author appreciates the many discussions with Joachim Kerner and Andreas W\"unsch  at an early stage of this work and he thanks 
Nicolas Rougerie for asking about the possibility to extend the main result to $N$-polaron systems. Kerner and W\"unsch were supported by the \emph{Deutsche Forschungsgemeinschaft} through the Graduiertenkolleg 1838, "Spectral Theory and Dynamics of Quantum System".

%\bibliographystyle{plain}

%The file Literatur.bib is used as source for the bibliography.
%\bibliography{Literatur}

\end{document}